\newtheorem{lemma}{Lemma}
\newtheorem{theorem}{Theorem}
\newtheorem{corollary}{Corollary}
\newtheorem{claim}{Claim}
\theoremstyle{definition}
\newtheorem{definition}{Definition}
\newcommand{\R}{\mathbb R}
\newcommand{\size}[1]{\ensuremath{\left|#1\right|}}
\newcommand{\set}[1]{\ensuremath{\left\{#1\right\}}}
\tikzstyle{vertex}=[circle,draw,align=center]
\tikzstyle{node}=[circle,draw,fill=white!100,inner sep=0pt,minimum size=0.15cm,align=center]
\tikzstyle{weight} = [font=\small, black]
\tikzstyle{edge} = [draw,-]
\tikzstyle{selected edge} = [draw,line width=3pt,-,gray!70]
\tikzstyle{matched edge} = [draw,line width=3pt,-]
\tikzstyle{dashed edge} = [draw,dashed]
\tikzstyle{selected vertex} = [node, fill=gray!70, text=white]
\tikzstyle{remove selected vertex} = [vertex, fill=white!100]
\tikzstyle{blue colored edge} = [draw,line width=3pt,-,blue!80]
\tikzstyle{blue colored curved edge left} = [draw,line width=3pt,-,blue!80,bend left]
\tikzstyle{blue colored curved edge right} = [draw,line width=3pt,-,blue!80,bend right]
\tikzstyle{green colored edge} = [draw,line width=3pt,-,green!80]
\tikzstyle{remove selected edge} = [draw,line width=4pt,-,white!100]
\tikzstyle{remove matched edge} = [draw,line width=4pt,-,white!80,bend right]
\newlength{\rad}
\title{\textbf{An Efficient Characterization of Submodular\\Spanning Tree Games\footnote{A preliminary version of this paper appeared in Proceedings of the 20th Conference on Integer Programming and Combinatorial Optimization, {IPCO} 2019.}}}
\author{Zhuan Khye Koh\footnote{Department of Mathematics, London School of Economics, UK (z.koh3@lse.ac.uk). This work was done while the author was at the University of Waterloo.} \and Laura Sanit\`{a}\footnote{Department of Combinatorics and Optimization, University of Waterloo, Canada (lsanita@uwaterloo.ca).}}
\date{\vspace{-3ex}}
\begin{document}

\maketitle

\begin{abstract}
\emph{Cooperative games} are an important class of problems in game theory, where the goal is to distribute a value among a set of players who are allowed to cooperate by forming coalitions. An outcome of the game is given by an allocation vector that assigns a value share to each player. A crucial aspect of such games is \emph{submodularity} (or \emph{convexity}). Indeed, convex instances of cooperative games exhibit several nice properties, e.g. regarding the existence and computation of allocations realizing some of the most important solution concepts proposed in the literature. For this reason, a relevant question is whether one can give a polynomial time characterization of submodular instances, for prominent cooperative games that are in general non-convex.

In this paper, we focus on a fundamental and widely studied cooperative game, namely \emph{the spanning tree game}. An efficient recognition of submodular instances of this game was not known so far, and explicitly mentioned as an open question in the literature. We here settle this open problem by giving a polynomial time characterization of submodular spanning tree games.
\end{abstract}

\section{Introduction}
\emph{Cooperative games} are among the most studied classes of problems in game theory, with plenty of applications in economics, mathematics, and computer science. In such games, the goal is to distribute cost (or revenue) among a set of participants, usually called \emph{players}, who are allowed to cooperate. Formally, we are given a set of players $N$, and a characteristic function $\nu: 2^N \rightarrow \R$, with $\nu(\emptyset) = 0$. Here, $\nu(S)$ represents the cost paid (revenue received) by the subset of players $S$ if they choose to form a coalition. An outcome of the game is given by an \emph{allocation} $y \in \R^N$ such that $\sum_{v \in N} y_v = \nu(N)$, which assigns a cost (revenue) share to each player. Of course, there are a number of criteria for evaluating how ``good'' an allocation is, such as \emph{stability}, \emph{fairness}, and so on.

Probably the most popular solution concept for cooperative games is the \emph{core}. It is the set of stable outcomes where no subset of players has an incentive to form a coalition to deviate. In a cooperative cost game, this translates naturally to the following constraint: $\sum_{v \in S} y_v \leq \nu(S)$, for all $S \subseteq N$. Intuitively, if this constraint is violated for some set $S$, the total cost currently paid by the players in $S$ is more than the total cost $\nu(S)$ they would have to pay if they form a coalition -- this incentivizes these players to deviate from the current allocation. Besides the core, there are several other crucial solution concepts which have been defined in the literature, e.g. the \emph{Shapley value}, the \emph{nucleolus}, the \emph{kernel}, the \emph{bargaining set}, and the \emph{von Neumann-Morgenstern solution set} (we refer to \cite{book/synthesis/ChalkiadakisEW11} for details). Many fundamental questions involving such solution concepts have been investigated in the past few decades: Which cooperative game instances admit an allocation realizing a particular solution concept? Can we efficiently compute it? Can we test whether a given allocation belongs to such sets?

\emph{Submodularity} (or \emph{convexity}) is a crucial property which yields interesting answers to some of the questions above. An instance of a cooperative cost game is called submodular if the characteristic function $\nu$ is submodular, meaning that 
\[\forall A, B \subseteq N, \; \nu(A) + \nu(B) \geq \nu (A \cup B) + \nu(A \cap B). \qquad (*)\]
Submodular games exhibit a large number of desirable properties. In particular,
(i) a core solution always exists and can be computed in polynomial time \cite{journal/ijgt/Shapley71}; 
(ii) testing whether an allocation belongs to the core is equivalent to separating over the extended polymatroid of $\nu$, which can be performed efficiently \cite{book/algorithms/GroetschelLS88}; 
(iii) computing the nucleolus can be done efficiently \cite{report/Kuipers96}; 
(iv) there is a nice ``snowballing'' effect that arises when the game is played cooperatively, meaning that joining a coalition becomes more attractive as the coalition grows, and so the value of the so-called grand coalition $\nu(N)$ is always reached \cite{journal/ijgt/Shapley71}. We refer to \cite{journal/ijgt/MaschlerPS71,journal/ijgt/Shapley71} for other interesting properties of submodular games involving other crucial solution concepts. Given these observations, it is not surprising that some researchers have investigated whether it is possible to give an efficient characterization of submodular instances, for prominent cooperative games that are in general non-convex. Such characterizations are known, for example, for the minimum coloring game and the minimum vertex cover game \cite{journal/mmor/Okamoto03}, as well as for some communication games \cite{journal/ijgt/vandenNouwelandB91}.

This paper focuses on one of the most fundamental cooperative games, namely the \emph{spanning tree game}. This game was introduced more than 40 years ago \cite{journal/networks/Bird76,journal/networks/ClausK73}, and since then it has been widely studied in the literature. To get an intuition about the problem, consider the following setting. A set of clients $N$ would like to be connected to a central source $r$ which can provide a service to them. The clients wish to build a network connecting them to the source $r$, at minimum cost. An obvious way to solve this problem is to compute a minimum spanning tree connecting $N\cup\set{r}$. But how should the clients fairly split the cost of the tree among them? Formally, an instance of the spanning tree game is described by an edge-weighted complete graph $G=(V,E)$ where $V=N\cup\set{r}$. The set of players is given by $N$, and the characteristic function $\nu(S)$ is equal to the cost of a minimum spanning tree in the subgraph induced by $S \cup \{r\}$.

Despite being one of the most studied cooperative games, the existence of an efficient characterization of submodularity for the spanning tree game has remained elusive so far. Granot and Huberman \cite{journal/siamadm/GranotH82} proved that spanning tree games are \emph{permutationally convex} (which is a generalization of submodularity). Their result implies that a core solution always exists for such games, despite being non-convex in general (this was first proven by the same authors in \cite{journal/mp/GranotH81}). However, other nice properties of submodular games do not generalize: for general spanning tree games, testing core membership is coNP-hard \cite{journal/ijgt/FaigleKFH97}, and computing the nucleolus is NP-hard \cite{journal/ijgt/FaigleKK98}. Trudeau \cite{journal/geb/Trudeau12} gave a sufficient condition for an instance of the game to be submodular. An important step forward was made by Kobayashi and Okamoto \cite{journal/networks/KobayashiO14}, who gave a characterization of submodularity for instances of the spanning tree game where the edge weights are restricted to take only two values. For general weights, they stated some necessary (but not always sufficient) as well as some sufficient (but not always necessary) conditions for an instance to be submodular. Whether a polynomial time characterization of submodularity exists for spanning tree games is left as an open question. In fact, they stated twice in their paper:

\smallskip
\noindent
\emph{``We feel that recognizing a submodular minimum-cost spanning tree game is coNP-complete, but we are still far from proving such a result.''}\\  

\medskip
\noindent
{\bf Our results and techniques.} In this paper, we finally settle this open question: we give a polynomial time characterization
 of submodular spanning tree games. 
 
Our characterization uses combinatorial techniques and it is based on two main ingredients. The first one, described in Section \ref{sec:violated_cycles}, is a generalization of Kobayashi and Okamoto's result \cite{journal/networks/KobayashiO14}. When the edges can have only two distinct weights, they proved that the only obstruction to submodularity comes from the presence of certain cycles in the graph induced by the cheaper edges. 
When dealing with more weight values, say $w_1<w_2<\dots<w_k$, things become necessarily more complicated.  We can still prove that an obstruction to submodularity is given by certain cycles, which we call \emph{violated}, but (a) our definition of violated cycles is more involved than the one in \cite{journal/networks/KobayashiO14}, and (b) we have to look for such cycles not just in one induced graph, but in each graph induced by the edges of weight at most $w_i$, for all $i<k$. 

Furthermore, the presence of violated cycles is not anymore the only obstruction to submodularity. Roughly speaking, violated cycles capture how the edges of a certain weight should relate to the cheaper ones, but we still need a condition that takes into account the ``magnitude'' of distinct weight values, when $k>2$. This leads to the second main ingredient of our characterization, described in Section \ref{sec:candidate_edges}. We show that, under the assumption of not having violated cycles, we can identify polynomially many subsets of vertices which could yield the highest possible violation to the submodularity inequality $(*)$. We can then efficiently test the submodularity of our instance by checking whether the inequality $(*)$ is satisfied on this family of subsets of vertices.
Combining these two ingredients yields a polynomial time characterization of submodularity for spanning tree games, as described in Section \ref{sec:characterization}.

We conclude our paper with an additional result. As previously mentioned,
the authors of \cite{journal/networks/KobayashiO14} gave a necessary condition for submodularity of the spanning tree game. They also stated that they do not know whether their condition can be verified in polynomial time. We answer this question affirmatively in Section~\ref{sec:wide}.

\section{Preliminaries and notation}
For a subset $S\subseteq V$, let $\textsf{mst}(S)$ denote the weight of a minimum spanning tree in $G[S]$, where $G[S]$ is the subgraph of $G$ induced by $S$. Given a subgraph $H$ of $G$, let $w(H)$ denote the sum of edge weights in $H$, i.e.~$\sum_{e\in E(H)}w(e)$. For an edge set $F$, we will also use $w(F)$ to indicate the sum of edge weights in $F$. For a vertex $u\in V$, $N_H(u)$ is the neighborhood of $u$ in $H$, while $\delta_H(u)$ is the set of edges incident to $u$ in $H$. For any positive integer $k$, $[k]$ represents the set $\set{1,2,\dots,k}$. Given a pair of vertices $u,v\in N$, let $\mathcal{S}_{uv}$ denote the family of vertex subsets which contain $r$ but not $u$ or $v$, i.e.~$\mathcal{S}_{uv}:=\set{S\subseteq V:r\in S \mbox{ and } u,v\notin S}$. Define the function $f_{uv}:\mathcal{S}_{uv}\rightarrow \R$ as
\[f_{uv}(S) := \textsf{mst}(S\cup u) + \textsf{mst}(S\cup v) - \textsf{mst}(S) - \textsf{mst}(S\cup\set{u,v}).\]
It is easy to see that the spanning tree game on $G$ is submodular if and only if $f_{uv}(S)\geq 0$ for all $u,v\in N$ and $S\in \mathcal{S}_{uv}$. Let $w_1<w_2<\dots<w_k$ be the edge-weights of $G$. For each $i\in [k]$, define the graph $G_i:=(V,E_i)$ where $E_i:=\set{e\in E:w(e)\leq w_i}$. Note that $G_k=G$.
For a vertex $u\in V$, denote $N_i(u)$ as the neighborhood of $u$ in $G_i$. For an edge $uv\in E$, define the neighborhood of $uv$ in $G_i$ as 
\[N_i(uv) := N_i(u) \cap N_i(v).\]
It represents the set of vertices whose edges to $u$ and $v$ have weight at most $w_i$. Notice that $u,v\notin N_i(uv)$. We will also need the following graph theory terminology. A \emph{hole} is an induced cycle with at least four vertices. A \emph{diamond} is the complete graph $K_4$ minus one edge. We will refer to the vertices of degree 2 in a diamond as \emph{tips}. Lastly, the following property of minimum spanning trees will be useful to us.

\begin{lemma}\label{lem:reuse}
Let $T$ be a minimum spanning tree of $G$. For every subset $S\subseteq V$, there exists a minimum spanning tree of $G[S]$ which contains $E(T[S])$.
\end{lemma}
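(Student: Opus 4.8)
The plan is to run an extremal exchange argument. Among all minimum spanning trees of $G[S]$, I would choose one, call it $T'$, that shares as many edges as possible with $T[S]$, and then show this forces $E(T[S])\subseteq E(T')$. Suppose not, and fix an edge $e=uv\in E(T[S])\setminus E(T')$ (note $u,v\in S$, so $e\in E(G[S])$; also $G[S]$ is connected since $G$ is complete, so speaking of minimum spanning trees of $G[S]$ makes sense throughout, and if $|S|\le 1$ the statement is trivial). Deleting $e$ from $T$ splits $V$ into the two components of $T-e$, say $X\ni u$ and $Y\ni v$; by construction $e$ is the \emph{unique} edge of $T$ crossing the cut $(X,Y)$.

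Next I would consider the fundamental cycle $C$ of $e$ with respect to $T'$ inside $G[S]$, so that $C-e$ is the $T'$-path from $u$ to $v$. Since $u\in X$ and $v\in Y$, this path must contain some edge $f\neq e$ crossing $(X,Y)$. The weight of $f$ is then pinned down exactly by two standard optimality properties: because $T$ is a minimum spanning tree and $e$ is the only $T$-edge across $(X,Y)$, the cut property gives $w(e)\le w(f)$; because $T'$ is a minimum spanning tree and $e\notin E(T')$, the cycle property gives $w(g)\le w(e)$ for every edge $g$ of $C-e$, in particular $w(f)\le w(e)$. Hence $w(f)=w(e)$. Moreover, since $e$ is the only edge of $T$ crossing $(X,Y)$, we have $f\notin E(T)$, and therefore $f\notin E(T[S])$.

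Finally I would perform the swap $T'':=(T'-f)+e$. This is again a spanning tree of $G[S]$ (an edge exchange along a fundamental cycle), it has the same weight as $T'$ because $w(f)=w(e)$, hence it is still minimum, and it contains strictly more edges of $T[S]$ than $T'$ does — we added $e\in E(T[S])$ and deleted $f\notin E(T[S])$ — contradicting the choice of $T'$. I expect the only delicate point to be the handling of repeated edge weights: with distinct weights the minimum spanning tree is unique and the claim is almost immediate, and the entire content of the lemma is in guaranteeing that the swapped-in edge $f$ has weight \emph{exactly} $w(e)$ (not merely $\ge$ or $\le$), which is precisely why the argument must invoke the cut optimality of $T$ and the cycle optimality of $T'$ at the same time.
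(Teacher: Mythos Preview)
Your proof is correct and follows essentially the same approach as the paper. The paper phrases it as an induction on $|E(T[S])\setminus E(T_S)|$ rather than an extremal choice, and locates $f$ by saying ``pick $f\in E(C)\setminus E(T)$ whose fundamental cycle in $T$ contains $e$'' rather than via the cut $(X,Y)$ of $T-e$, but these are the same argument in different words; the two inequalities $w(e)\le w(f)$ and $w(f)\le w(e)$ are obtained in exactly the same way.
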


\begin{proof}
Let $T_S$ be a minimum spanning tree of $G[S]$. We proceed by induction on $\size{E(T[S])\setminus E(T_S)}$. For the base case, if $\size{E(T[S])\setminus E(T_S)}=0$, then $T_S$ contains all the edges in $E(T[S])$. For the inductive step, assume $\size{E(T[S])\setminus E(T_S)}>0$. Then, there exists an edge $e\in E(T[S])$ where $e\notin E(T_S)$. Adding $e$ to $T_S$ creates a cycle $C$ in $G[S]$. So there exists an edge $f\in E(C)$ which is not an edge of $T$. Pick an appropriate $f$ such that when added to $T$ creates a cycle containing $e$. Since $T+f-e$ is a spanning tree of $G$, we have $w(f)\geq w(e)$. On the other hand, since $T_S+e-f$ is a spanning tree of $G[S]$, we also have $w(e)\geq w(f)$. This implies $w(e)=w(f)$, so $T_S+e-f$ is a minimum spanning tree of $G[S]$. As $\size{E(T[S])\setminus E(T_S+e-f)} = \size{E(T[S])\setminus E(T_S)}-1$, by the induction hypothesis we are done.
\end{proof}

\section{Violated cycles}
\label{sec:violated_cycles}

In this section, we will prove that a submodular spanning tree game does not contain violated cycles, which will be defined later. First, we need to introduce the concept of \emph{well-covered} cycles.

\begin{definition}
Given a cycle $C$ and a chord $f=uv$, let $P_1$ and $P_2$ denote the two $u$-$v$ paths in $C$. The cycles $P_1+f$ and $P_2+f$ are called the \emph{subcycles of $C$ formed by $f$}. We say that $f$ \emph{covers} $C$ if $w(f)\geq w(e)$ for all $e\in E(P_1)$ or for all $e\in E(P_2)$. If $C$ is covered by all of its chords, then it is \emph{well-covered}.
\end{definition}

Next, we define the following two simple structures. We then proceed to show that a submodular spanning tree game does not contain either of them.

\begin{definition}
A hole is \emph{bad} if at least one of its vertices is not adjacent to $r$. An induced diamond is \emph{bad} if its hamiltonian cycle is well-covered but at least one of its tips is not adjacent to $r$. 
\end{definition}

\begin{figure}[H]
\def\dist{1.5}
\centering
\begin{tikzpicture}[node distance=\dist cm, inner sep=2.5pt, minimum size=2.5pt, auto]
\node [node] (u1) at (0,0) {};
\node [node,label=right:\footnotesize{$r$}] (u2) [right of=u1] {};
\node [node] (u3) [below of=u2] {};
\node [node] (u4) [left of=u3] {};
\path [edge] (u1) -- (u2) -- (u3) -- (u4) -- (u1);

\node [node] (v1) at (2*\dist,0) {};
\node [node] (v2) [right of=v1] {};
\node [node] (v3) [below of=v2] {};
\node [node] (v4) [left of=v3] {};
\node [node,label=right:\footnotesize{$r$}] (v5) [right of=v2] {};
\path [edge] (v1) -- (v2) -- (v3) -- (v4) -- (v1);
\path [edge] (v5) -- (v2);
\path [edge] (v5) -- (v3);
\path [edge] (v5) -- (v4);

\node [node] (x1) at (5*\dist,0) {};
\node [selected vertex,label=right:\footnotesize{$r$}] (x2) [right of=x1] {};
\node [node] (x3) [below of=x2] {};
\node [selected vertex] (x4) [left of=x3] {};
\path [edge] (x1) -- (x2) -- (x3) -- (x4) -- (x1);
\path [edge] (x1) -- (x3);

\node [node] (w1) at (7*\dist,0) {};
\node [selected vertex] (w2) [right of=w1] {};
\node [node] (w3) [below of=w2] {};
\node [selected vertex] (w4) [left of=w3] {};
\node [node,label=right:\footnotesize{$r$}] (w5) [right of=w2] {};
\path [edge] (w1) -- (w2) -- (w3) -- (w4) -- (w1);
\path [edge] (w1) -- (w3);
\path [edge] (w5) -- (w2);
\path [edge] (w5) -- (w3);
\end{tikzpicture}
\caption{Examples of bad holes and bad induced diamonds. The tips of the diamonds are shaded. Every edge here has the same weight.}
\end{figure}
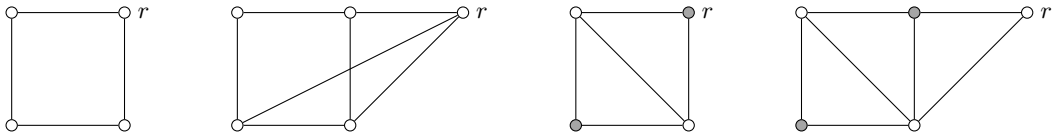

\begin{lemma}\label{lem:holediamond}
If the spanning tree game on $G$ is submodular, then there are no (a) bad holes or (b) bad induced diamonds in $G_i$ for all $i<k$.
\end{lemma}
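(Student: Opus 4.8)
The plan is to prove both parts by contraposition, using the equivalence recorded earlier: the spanning tree game on $G$ is submodular if and only if $f_{uv}(S)\ge 0$ for all $u,v\in N$ and $S\in\mathcal{S}_{uv}$. So I assume that for some $i<k$ the graph $G_i$ contains a bad hole (part (a)) or a bad induced diamond (part (b)), and in each case I exhibit a triple $(u,v,S)$ with $f_{uv}(S)<0$.

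\emph{Holes.} Let $C$ be a bad hole in $G_i$, pick $z\in V(C)$ with $w(zr)>w_i$, let $u,v$ be the two neighbours of $z$ on $C$, and set $S:=\{r\}\cup(V(C)\setminus\{u,v\})$. Since $zu$ and $zv$ are cycle edges (hence of weight $\le w_i$) and $w(zr)>w_i$, neither $u$ nor $v$ equals $r$, so $S\in\mathcal{S}_{uv}$ (and if $r$ itself lies on $C$ it is simply retained in $S$). The estimate rests on two facts about $C$, which is induced in $G_i$: every edge of $C$ has weight $\le w_i$ while every chord of $C$ has weight $>w_i$; and $w(zr)>w_i$. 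Hence in $G[S]$ the vertex $z$ is joined to the remaining vertices only by edges of weight $>w_i$, whereas in $G[S\cup u]$, $G[S\cup v]$, and $G[S\cup\{u,v\}]$ it can be reattached cheaply through the cycle edge $zu$, through $zv$, or through both, which together with the cheap sub-path of $C$ from $u$ to $v$ through $V(C)\setminus\{z,u,v\}$ rebuild almost all of the cheap cycle $C$. Making this quantitative means writing down explicit spanning trees on the three augmented vertex sets that perform exactly these reattachments (which upper-bounds $\textsf{mst}(S\cup u)$ and $\textsf{mst}(S\cup v)$) and pinning down $\textsf{mst}(S)$ and $\textsf{mst}(S\cup\{u,v\})$ via Lemma~\ref{lem:reuse} and Kruskal's rule. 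The crucial observation is that the edges incident to $r$ enter the four minimum spanning trees with cancelling multiplicity in the alternating sum $f_{uv}(S)$; once this is granted, the sign of $f_{uv}(S)$ is decided by comparing the weight-$>w_i$ edges that $G[S]$ is forced to use against the weight-$\le w_i$ edges paid for the reattachments, and one gets $f_{uv}(S)<0$.

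\emph{Diamonds.} For a bad induced diamond $D$ with tips $b,d$ and non-tips $a,c$, I would run the identical argument on the Hamiltonian $4$-cycle $C=abcd$, taking $z$ to be a tip not adjacent to $r$, $\{u,v\}=\{a,c\}$ its two neighbours on $C$, and $S:=\{r,b,d\}$. The non-edge $bd$ of $D$ is precisely the chord of $C$ joining the two tips; it has weight $>w_i$ and forces $\textsf{mst}(S)$ to be large, exactly as the chords of a hole did. The one new feature is the other chord $ac$ of $C$, which now has weight $\le w_i$, and this is where the well-covered hypothesis enters: it says $w(ac)\ge\max(w(ab),w(bc))$ or $w(ac)\ge\max(w(ad),w(dc))$, so in Kruskal's rule applied to $G[S\cup\{u,v\}]$ the two other edges of one side of $C$ are processed no later than $ac$ and already connect its endpoints, hence some minimum spanning tree avoids $ac$. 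For the purpose of computing all four $\textsf{mst}$-values the graph then behaves as if $ac$ were absent, i.e.\ as the $4$-hole $abcd$, and the previous computation carries over.

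The routine part is producing the explicit spanning trees and certifying optimality by the exchange argument of Lemma~\ref{lem:reuse} or by Kruskal's rule. The delicate point — the one I expect to be the main obstacle — is controlling the edges incident to $r$, whose weights are completely arbitrary: I must verify that each of them contributes to the four minimum spanning trees with cancelling multiplicity in $f_{uv}(S)$. This is immediate when $r$ lies outside the structure, but requires a short case analysis when $r$ is one of its vertices, and, in the diamond case, according to which side of the Hamiltonian cycle the chord $ac$ covers and which tip is the one non-adjacent to $r$. Once the cancellation is in place, $f_{uv}(S)<0$ follows at once from $w(zr)>w_i\ge w(e)$ (respectively, from $w(bd)>w_i\ge w(e)$) for every cheap cycle edge $e$ of $C$.
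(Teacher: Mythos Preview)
Your overall strategy---contraposition, exhibiting a triple $(u,v,S)$ with $f_{uv}(S)<0$---is exactly the paper's, and for the diamond case your choice $S=\{r,b,d\}$, $\{u,v\}=\{a,c\}$ coincides with the paper's. For holes you make a different choice: you center at a vertex $z$ that is \emph{not} adjacent to $r$, whereas the paper, when $r\notin V(C)$, centers at a vertex $s$ with \emph{cheapest} edge to $r$ (and when $r\in V(C)$ centers at $r$ itself). The paper's choice is made precisely so that the edge $rs$ appears in all four spanning trees, which is what makes the $r$-contribution cancel cleanly.

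The gap in your proposal is the ``cancelling multiplicity'' claim, which is both the crux of the argument and left unproven. Two concrete issues. First, your diagnosis of which case is delicate is inverted: when $r\in V(C)$ your $S$ equals $V(C)\setminus\{u,v\}$, $r$ sits on the residual path, and there are no extraneous $r$-edges to control---the four $\textsf{mst}$ values can be written down directly, as in the paper's Case~1. The genuinely delicate case is $r\notin V(C)$, because then $w(ru),w(rv),w(rx)$ for $x$ on the residual path are unconstrained, and the cheapest $r$-edge used in $\textsf{mst}(S\cup u)$ need not be the one used in $\textsf{mst}(S\cup v)$ or in $\textsf{mst}(S)$; the contributions do \emph{not} literally cancel in multiplicity. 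What is true is the weaker inequality that the net $r$-contribution has the right sign, but establishing this requires a case analysis (how many vertices of $C$ are adjacent to $r$ in $G_i$, and which ones) comparable to the paper's Subcases~2.1--2.3. Second, the paper repeatedly uses reductions to earlier cases to tame this analysis: e.g.\ if $r$ is $G_i$-adjacent to two non-adjacent vertices of the hole one finds a smaller bad hole containing $r$; in the diamond subcases one sometimes finds a bad induced diamond containing $r$. Without such reductions your direct computation from the $z$-centered choice does not obviously close, and you have not indicated how it would.

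In short: the plan is right and the diamond reduction via well-coveredness is correctly identified, but the $r$-edge control is the whole proof, not a routine coda, and your sketch neither carries it out nor identifies the mechanism (choosing the cheapest-to-$r$ anchor, plus case reductions) that the paper uses to make it tractable.
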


\begin{proof}[Proof of Lemma \ref{lem:holediamond}(a)]
We will prove the contrapositive. Let $C$ be a bad hole in $G_i$ for some $i<k$. Consider the following two cases:

\medskip
\noindent
\emph{Case 1: $C$ contains $r$.} Let $u,v$ be the vertices adjacent to $r$ in $C$. Define the set $S:=V(C)\setminus\set{u,v}$. To prove that the instance is not submodular, it suffices to show that $f_{uv}(S)<0$. Let $P$ be the path obtained by deleting $r,u,v$ from $C$. Let $u',v'$ be the endpoints of $P$ where $uu',vv'\in E(C)$ (see Figure \ref{fig:holes}). Note that $u'=v'$ if $P$ is a singleton. It is easy to see that
\begin{align*}
	\textsf{mst}(S\cup u) &= w(P) + w(ru) + w(uu') \\
	\textsf{mst}(S\cup v) &= w(P) + w(rv) + w(vv').
\end{align*}
By Lemma \ref{lem:reuse}, there exists a minimum spanning tree of $G[S]$ which contains $P$. Thus,
\[\textsf{mst}(S) \geq w(P) + w_{i+1}.\]
Next, deleting the most expensive edge from $C$ creates a minimum spanning tree of $G[S\cup\set{u,v}]$. Since this edge has weight at most $w_i$, we obtain
\[\textsf{mst}(S\cup\set{u,v}) \geq w(P) + w(ru) + w(uu') + w(rv) + w(vv') - w_i\]
Then, combining the equations and inequalities above yields
\[f_{uv}(S) = \textsf{mst}(S\cup u) + \textsf{mst}(S\cup v) - \textsf{mst}(S) - \textsf{mst}(S\cup\set{u,v}) \leq w_i - w_{i+1} < 0.\]

\begin{figure}[ht]
\def \hrad {1.5}
\def \vrad {0.8}
\centering
\begin{minipage}{0.33\textwidth}
\centering
\begin{tikzpicture}
	\fill[gray!30] plot[smooth, tension=.7] coordinates {(0,-1.2*\vrad) (-\hrad,-0.8*\vrad) (-1.4*\hrad,0.2) (-0.3*\hrad,0.2*\vrad) (-0.3*\hrad,1.2*\vrad) (0.3*\hrad,1.2*\vrad) (0.3*\hrad,0.2*\vrad) (1.4*\hrad,0.2) (\hrad,-0.8*\vrad) (0,-1.2*\vrad) };

	\draw[thick,color=black!50] (0,0) ellipse (\hrad cm and \vrad cm);

	\node[node,label=below:\footnotesize{$r$}] (r) at (0,\vrad) {};
	\node[node,label=below:\footnotesize{$u$}] (u) at (-\hrad/1.7,0.8*\vrad) {};
	\node[node,label=left:\footnotesize{$u'$}] (u') at (-\hrad,0) {};
	\node[node,label=below:\footnotesize{$v$}] (v) at (\hrad/1.7,0.8*\vrad) {};
	\node[node,label=right:\footnotesize{$v'$}] (v') at (\hrad,0) {};
	\node at (0,2*\vrad) {}; 

	\node at (0,-1.4*\vrad) {\small{$P$}};
\end{tikzpicture}
\end{minipage}
\begin{minipage}{0.33\textwidth}
\centering
\begin{tikzpicture}
	\fill[gray!30] plot[smooth, tension=.7] coordinates {(0,-1.2*\vrad) (-\hrad,-0.8*\vrad) (-1.4*\hrad,0.2) (-0.4*\hrad,0.2*\vrad) (-0.4*\hrad,2.2*\vrad) (0.4*\hrad,2.2*\vrad) (0.4*\hrad,0.2*\vrad) (1.4*\hrad,0.2) (\hrad,-0.8*\vrad) (0,-1.2*\vrad) };

	\draw[thick,color=black!50] (0,0) ellipse (\hrad cm and \vrad cm);

	\node[node,label=left:\footnotesize{$r$}] (r) at (0,2*\vrad) {};
	\node[node,label=below:\footnotesize{$s$}] (s) at (0,\vrad) {};
	\node[node,label=below:\footnotesize{$u$}] (u) at (-\hrad/1.7,0.8*\vrad) {};
	\node[node,label=left:\footnotesize{$u'$}] (u') at (-\hrad,0) {};
	\node[node,label=below:\footnotesize{$v$}] (v) at (\hrad/1.7,0.8*\vrad) {};
	\node[node,label=right:\footnotesize{$v'$}] (v') at (\hrad,0) {};

	\path[edge] (r) -- (s);

	\node at (0,-1.4*\vrad) {\small{$P$}};
\end{tikzpicture}
\end{minipage}
\begin{minipage}{0.32\textwidth}
\centering
\begin{tikzpicture}
	\fill[gray!30] plot[smooth, tension=.7] coordinates {(0,-1.2*\vrad) (-\hrad,-0.8*\vrad) (-1.4*\hrad,0.2) (-0.4*\hrad,0.2*\vrad) (-0.4*\hrad,2.2*\vrad) (0.4*\hrad,2.2*\vrad) (0.4*\hrad,0.2*\vrad) (1.4*\hrad,0.2) (\hrad,-0.8*\vrad) (0,-1.2*\vrad) };

	\draw[thick,color=black!50] (0,0) ellipse (\hrad cm and \vrad cm);

	\node[node,label=left:\footnotesize{$r$}] (r) at (0,2*\vrad) {};
	\node[node,label=below:\footnotesize{$s$}] (s) at (0,\vrad) {};
	\node[node,label=below:\footnotesize{$u$}] (u) at (-\hrad/1.7,0.8*\vrad) {};
	\node[node,label=left:\footnotesize{$u'$}] (u') at (-\hrad,0) {};
	\node[node,label=below:\footnotesize{$v$}] (v) at (\hrad/1.7,0.8*\vrad) {};
	\node[node,label=right:\footnotesize{$v'$}] (v') at (\hrad,0) {};

	\path[edge] (r) -- (s);
	\path[edge] (r) -- (u);

	\node at (0,-1.4*\vrad) {\small{$P$}};
\end{tikzpicture}
\end{minipage}
\caption{An example of the subgraph $G_i[S\cup\set{u,v}]$ in Case 1 and Subcases 2.1--2.2 respectively. The shaded region represents the set $S$.}
\label{fig:holes}
\end{figure}
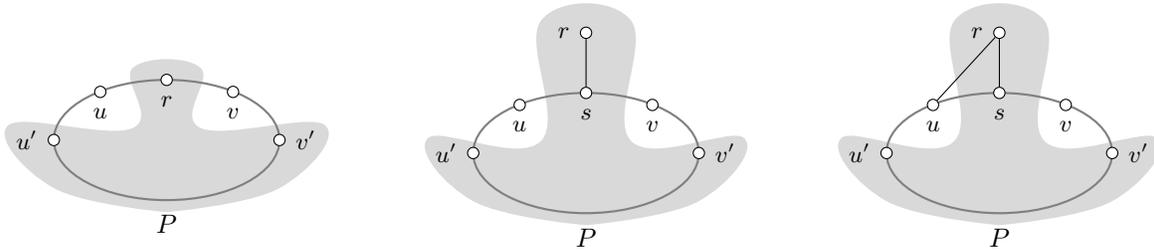

\noindent
\emph{Case 2: $C$ does not contain $r$.} We claim that if $r$ is adjacent in $G_i$ to two non-adjacent vertices of $C$, then we are done. Since $C$ is a bad hole in $G_i$, there exists a vertex $t\in V(C)$ such that $rt\notin E_i$. Starting from $t$, traverse the hole $C$ in both directions until we encounter the first vertices $p$ and $q$ such that $rp,rq\in E_i$ respectively. By our assumption, $p\neq q$ and $pq\notin E(C)$. Let $Q$ be the $p$-$q$ path in $C$ which contains $t$. Then, by our choice of $p$ and $q$, the cycle $Q+rp+rq$ is a bad hole which contains $r$. It follows that the instance is not submodular by Case 1.

Pick a vertex $s\in V(C)$ with the cheapest edge to $r$, i.e. $w(rs)\leq w(rx)$ for all $x\in V(C)$. Let $u,v$ be the vertices adjacent to $s$ in $C$. Define the set $S:=(V(C)\cup r)\setminus\set{u,v}$. Similar to the previous case, it suffices to show that $f_{uv}(S)<0$. Let $P$ be the path obtained by deleting $s,u,v$ from $C$. Let $u',v'$ be the endpoints of $P$ where $uu',vv'\in E(C)$ (see Figure \ref{fig:holes}). We are left with the following two subcases:

\medskip
\noindent
\emph{Subcase 2.1: $r$ is adjacent to at most one vertex of $C$.} It is easy to see that
\begin{align*}
	\textsf{mst}(S\cup u) &= w(P) + w(rs) + w(su) + w(uu') \\
	\textsf{mst}(S\cup v) &= w(P) + w(rs) + w(sv) + w(vv').
\end{align*}
By Lemma \ref{lem:reuse}, there exists a minimum spanning tree of $G[S]$ which contains $P$ and $rs$. Thus, 
\[\textsf{mst}(S) \geq w(P) + w(rs) + w_{i+1}\]
Next, deleting the most expensive edge from $C$ and adding $rs$ creates a minimum spanning tree of $G[S\cup\set{u,v}]$. Since the deleted edge has weight at most $w_i$, we obtain
\[\textsf{mst}(S\cup\set{u,v}) \geq w(P) + w(rs) + w(su) + w(uu') + w(sv) + w(vv') - w_i.\]
Then, combining the equations and inequalities above yields
\[f_{uv}(S) = \textsf{mst}(S\cup u) + \textsf{mst}(S\cup v) - \textsf{mst}(S) - \textsf{mst}(S\cup\set{u,v}) \leq w_i - w_{i+1} < 0.\]

\smallskip
\noindent
\emph{Subcase 2.2: $r$ is adjacent to two adjacent vertices of $C$.} Without loss of generality, suppose that $rs,ru\in E_i$. It is easy to see that
\begin{align*}
	\textsf{mst}(S\cup u) &= w(P) + w(rs) + \min\set{w(ru),w(su)} + w(uu') \\
	\textsf{mst}(S\cup v) &= w(P) + w(rs) + w(sv) + w(vv')
\end{align*}
By Lemma \ref{lem:reuse}, there exists a minimum spanning tree of $G[S]$ which contains $P$ and $rs$. Thus,
\[\textsf{mst}(S) \geq w(P) + w(rs) + w_{i+1}.\]
Next, deleting the most expensive edge from the triangle $\set{rs,ru,su}$ does not increase the value of a minimum spanning tree in $G[S\cup\set{u,v}]$. In fact, a minimum spanning tree can be obtained by deleting one more edge from $G_i[S\cup\set{u,v}]$. So,
\[\textsf{mst}(S\cup\set{u,v}) \geq w(P) + w(rs) + \min\set{w(ru),w(su)} + w(uu') + w(sv) + w(vv') - w_i\]
Then, combining the equations and inequalities above yields
\[f_{uv}(S) = \textsf{mst}(S\cup u) + \textsf{mst}(S\cup v) - \textsf{mst}(S) - \textsf{mst}(S\cup\set{u,v}) \leq w_i - w_{i+1} < 0.\]
\end{proof}

\begin{proof}[Proof of Lemma \ref{lem:holediamond}(b)]
We will again prove the contrapositive. Let $D$ be a bad induced diamond in $G_i$ for some $i<k$. Consider the following two cases:

\medskip
\noindent
\emph{Case 1: $D$ contains $r$.} Observe that $r$ is a tip of $D$. Let $s$ be the other tip and $u,v$ be the non-tip vertices of $D$. Define the set $S:=\set{r,s}$ (see Figure \ref{fig:diamonds}). To prove that the instance is not submodular, it suffices to show that $f_{uv}(S)<0$. It is easy to see that
\begin{align*}
	\textsf{mst}(S) &\geq w_{i+1} \\
	\textsf{mst}(S\cup u) &= w(ru) + w(su) \\
	\textsf{mst}(S\cup v) &= w(rv) + w(sv).	
\end{align*}
Since the hamiltonian cycle of $D$ is well-covered, its chord $uv$ can be deleted without increasing the value of a minimum spanning tree in $G[S\cup\set{u,v}]$. We are now left with the hamiltonian cycle of $D$, so a minimum spanning tree can be obtained by removing the most expensive edge. This gives  
\[\textsf{mst}(S\cup\set{u,v}) \geq w(ru) + w(su) + w(rv) + w(sv) - w_i.\]
Then, combining the equations and inequalities above yields
\[f_{uv}(S) = \textsf{mst}(S\cup u) + \textsf{mst}(S\cup v) - \textsf{mst}(S) - \textsf{mst}(S\cup\set{u,v}) \leq w_i - w_{i+1} < 0.\]

\begin{figure}[ht]
\centering
\def\dist{1.2}
\begin{minipage}{0.24\textwidth}
\centering
\begin{tikzpicture}[node distance=\dist cm]
	\fill[gray!30] (0,-\dist/1.5) ellipse (0.3*\dist cm and 1.4*\dist cm);

	\node[node,label=above:\footnotesize{$r$}] (r) {};
	\node[node,label=left:\footnotesize{$u$}] (u) [below left of=r] {};
	\node[node,label=right:\footnotesize{$v$}] (v) [below right of=r] {};
	\node[node,label=below:\footnotesize{$s$}] (s) [below right of=u] {};
	\node [above of=r, yshift=0.3*\dist cm] {}; 

	\path[edge] (r) -- (u) -- (s) -- (v) -- (r);
	\path[edge] (u) -- (v);
\end{tikzpicture}
\end{minipage}
\begin{minipage}{0.24\textwidth}
\centering
\begin{tikzpicture}[node distance=\dist cm]
	\fill[gray!30] (0,-\dist/3) ellipse (0.3*\dist cm and 1.7*\dist cm);

	\node[node,label=left:\footnotesize{$s$}] (s) {};
	\node[node,label=left:\footnotesize{$r$}] (r) [above of=s] {};
	\node[node,label=left:\footnotesize{$u$}] (u) [below left of=s] {};
	\node[node,label=right:\footnotesize{$v$}] (v) [below right of=s] {};
	\node[node,label=below:\footnotesize{$t$}] (t) [below right of=u] {};

	\path[edge] (s) -- (u) -- (t) -- (v) -- (s);
	\path[edge] (u) -- (v);
	\path[edge] (r) -- (s);
\end{tikzpicture}
\end{minipage}
\begin{minipage}{0.24\textwidth}
\centering
\begin{tikzpicture}[node distance=\dist cm]
	\fill[gray!30] (0,-\dist/3) ellipse (0.3*\dist cm and 1.7*\dist cm);

	\node[node,label=left:\footnotesize{$s$}] (s) {};
	\node[node,label=left:\footnotesize{$r$}] (r) [above of=s] {};
	\node[node,label=left:\footnotesize{$u$}] (u) [below left of=s] {};
	\node[node,label=right:\footnotesize{$v$}] (v) [below right of=s] {};
	\node[node,label=below:\footnotesize{$t$}] (t) [below right of=u] {};

	\path[edge] (s) -- (u) -- (t) -- (v) -- (s);
	\path[edge] (u) -- (v);
	\path[edge] (r) -- (s);
	\path[edge] (r) -- (u);
\end{tikzpicture}
\end{minipage}
\begin{minipage}{0.24\textwidth}
\centering
\begin{tikzpicture}[node distance=\dist cm]
	\fill[gray!30] (0,-\dist/3) ellipse (0.3*\dist cm and 1.7*\dist cm);

	\node[node,label=left:\footnotesize{$s$}] (s) {};
	\node[node,label=left:\footnotesize{$r$}] (r) [above of=s] {};
	\node[node,label=left:\footnotesize{$u$}] (u) [below left of=s] {};
	\node[node,label=right:\footnotesize{$v$}] (v) [below right of=s] {};
	\node[node,label=below:\footnotesize{$t$}] (t) [below right of=u] {};

	\path[edge] (s) -- (u) -- (t) -- (v) -- (s);
	\path[edge] (u) -- (v);
	\path[edge] (r) -- (s);
	\path[edge] (r) -- (u);
	\path[edge] (r) -- (v);
\end{tikzpicture}
\end{minipage}
\caption{An example of the subgraph $G_i[S\cup\set{u,v}]$ in Case 1 and Subcases 2.1--2.3 respectively. The shaded region represents the set $S$. For Subcase 2.1, the picture assumes that $w(rs)< w(ru)$.}
\label{fig:diamonds}
\end{figure}
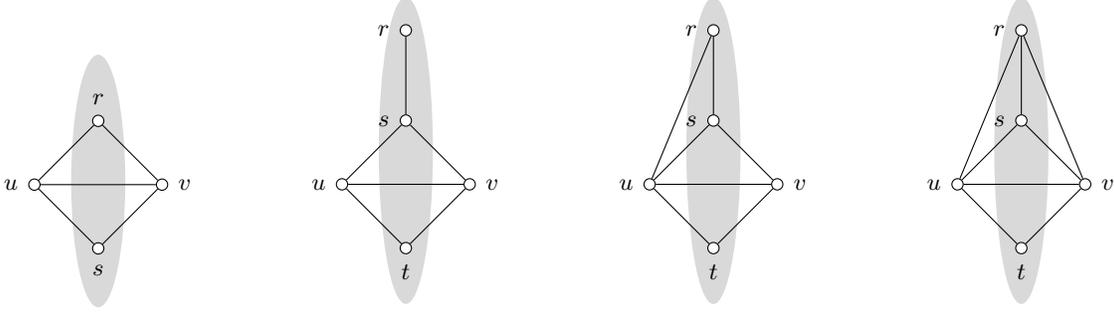

\noindent
\emph{Case 2: $D$ does not contain $r$.} Let $s,t$ be the tips of $D$ such that $w(rs)\leq w(rt)$. Note that $rt\notin E_i$ because $D$ is a bad induced diamond in $G_i$. Let $u,v$ be the non-tip vertices of $D$ where $w(ru)\leq w(rv)$. Define the set $S:=\set{r,s,t}$ (see Figure \ref{fig:diamonds}). Similar to the previous case, it suffices to show that $f_{uv}(S)<0$. Consider the following three subcases:

\medskip
\noindent
\emph{Subcase 2.1: $r$ is adjacent to at most one vertex of $D$.} Note that $rv\notin E_i$. It is also easy to see that
\begin{align*}
	\textsf{mst}(S) &\geq w(rs) + w_{i+1} \\
	\textsf{mst}(S\cup u) &= \min\set{w(rs),w(ru)} + w(su) + w(tu) \\
  	\textsf{mst}(S\cup v) &= \min\set{w(rs),w(rv)} + w(sv) + w(tv) 
\end{align*}
Next, observe that we can delete $uv$ and the most expensive edge in the hamiltonian cycle of $D$ without increasing the value of a minimum spanning tree in $G[S\cup\set{u,v}]$. Therefore,
\[\textsf{mst}(S\cup\set{u,v}) \geq \min\set{w(rs),w(ru)} + w(su) + w(tu) + w(sv) + w(tv) - w_i.\]
Then, combining the equations and inequalities above yields
\[f_{uv}(S) = \textsf{mst}(S\cup u) + \textsf{mst}(S\cup v) - \textsf{mst}(S) - \textsf{mst}(S\cup\set{u,v}) \leq w_i - w_{i+1} < 0.\]

\smallskip
\noindent
\emph{Subcase 2.2: $r$ is adjacent to two vertices of $D$.} We claim that if $ru,rv\in E_i$, then we are done. Note that this implies $rs,rt\notin E_i$. So if $w(uv)\geq \max\set{w(su),w(sv)}$, then $G[\set{r,s,u,v}]$ is a bad induced diamond in $G_i$. Otherwise, $G[\set{r,t,u,v}]$ is a bad induced diamond in $G_i$. Since they both contain $r$, the instance is not submodular by Case 1. Thus, we may assume that $rs,ru\in E_i$. Additionally, we may assume that $w(su)<\max\set{w(rs),w(ru)}$. Otherwise, $G[\set{r,s,u,v}]$ is a bad induced diamond, and we are done again by Case 1. Then, it is easy to see that
\begin{align*}
  \textsf{mst}(S) &\geq w(rs) + w_{i+1} \\
  \textsf{mst}(S\cup u) &= \min\set{w(rs),w(ru)} + w(su) + w(tu) \\
  \textsf{mst}(S\cup v) &= w(rs) + w(sv) + w(tv)
\end{align*}
By a similar reasoning as before, we can delete $uv$ and the most expensive edge in the hamiltonian cycle of $D$ without increasing the value of a minimum spanning tree in $G[S\cup\set{u,v}]$. Hence,
\begin{align*}
  \textsf{mst}(S\cup\set{u,v}) &\geq \min\set{w(rs),w(ru)} + w(su) + w(tu) + w(sv) + w(tv) - w_i
\end{align*}
Finally, combining the equations and inequalities above yields
\[f_{uv}(S) = \textsf{mst}(S\cup u) + \textsf{mst}(S\cup v) - \textsf{mst}(S) - \textsf{mst}(S\cup\set{u,v}) \leq w_i - w_{i+1} < 0.\]

\smallskip
\noindent
\emph{Subcase 2.3: $r$ is adjacent to three vertices of $D$.} Let $w(rv)=w_j$ for some $j\leq i$, and consider the induced diamond $G[\set{r,t,u,v}]$. If it is well-covered, then we are done by Case 1 because $rt\notin E_i$. So we may assume that $\max\set{w(su),w(sv)} \leq w(uv) < w(rv)$. Additionally, we may assume that $w(su)<\max\set{w(rs),w(ru)}$. Otherwise, $G[\set{r,s,u,v}]$ is a bad induced diamond in $G_{j-1}$, and we are done again by Case 1. Then, it is easy to see that  
\begin{align*}
  \textsf{mst}(S) &\geq w(rs) + w_{i+1} \\
  \textsf{mst}(S\cup u) &= \min\set{w(rs),w(ru)} + w(su) + w(tu) \\
  \textsf{mst}(S\cup v) &= \min\set{w(rs),w(rv)} + w(sv) + w(tv)
\end{align*}
By a similar reasoning as before, we can delete $uv$ and the most expensive edge in the hamiltonian cycle of $D$ without increasing the value of a minimum spanning tree in $G[S\cup\set{u,v}]$. Hence, 
\[\textsf{mst}(S\cup\set{u,v}) \geq \min\set{w(rs),w(ru)} + w(su) + w(tu) + w(sv) + w(tv) - w_i\]
Finally, combining the equations and inequalities above yields 
\[f_{uv}(S) = \textsf{mst}(S\cup u) + \textsf{mst}(S\cup v) - \textsf{mst}(S) - \textsf{mst}(S\cup\set{u,v}) \leq w_i - w_{i+1} < 0.\]
\end{proof}

We are now ready to define the main object of study in this section:

\begin{definition}
A \emph{violated} cycle is a well-covered cycle which contains at least a pair of non-adjacent vertices and at least a vertex not adjacent to $r$.
\end{definition}

Observe that bad holes and hamiltonian cycles of bad induced diamonds are examples of violated cycles (we consider a hole to be well-covered). The next lemma extends the scope of Lemma \ref{lem:holediamond} to include violated cycles. When $k=2$, this coincides with the condition given by Kobayashi and Okamoto \cite{journal/networks/KobayashiO14} because every cycle in $G_1$ is well-covered.

\begin{lemma}\label{lem:violated}
If the spanning tree game on $G$ is submodular, then there are no violated cycles in $G_i$ for all $i<k$.
\end{lemma}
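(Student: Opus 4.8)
The plan is to derive the statement from Lemma~\ref{lem:holediamond}: I will show that a violated cycle in $G_i$ (with $i<k$) would force a bad hole or a bad induced diamond to appear in $G_j$ for some $j<k$, which Lemma~\ref{lem:holediamond} forbids. The argument goes by induction on the number of vertices of the violated cycle. So, assuming the spanning tree game on $G$ is submodular, suppose for contradiction that some $G_i$ with $i<k$ contains a violated cycle, and among all such cycles (over all $i<k$) pick one, $C$ in $G_i$, with the fewest vertices. Being violated, $C$ has at least four vertices, a pair of non-adjacent vertices, and a vertex $t$ with $rt\notin E_i$.

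If $C$ has no chord in $G_i$, then $C$ is an induced cycle of $G_i$ on at least four vertices --- a hole --- and since $t$ is not adjacent to $r$ it is a bad hole of $G_i$, contradicting Lemma~\ref{lem:holediamond}(a). So $C$ has a chord $f=uv$ with $w(f)\le w_i$; since $C$ is well-covered, $f$ covers one of the two $u$-$v$ paths of $C$, say $P_1$, so that $w(f)\ge w(e)$ for every $e\in E(P_1)$. I would pick such a chord with $P_1$ as short as possible, over all chords of $G_i$ and the side(s) they cover. Set $C_1:=P_1+f$ and $C_2:=P_2+f$, where $P_2$ is the other $u$-$v$ path; both are cycles of $G_i$ with strictly fewer vertices than $C$. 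The crux is that $C_1$ is again well-covered: every chord $g$ of $C_1$ is also a chord of $C$ (its endpoints are non-consecutive on $C$), and the minimal choice of $P_1$ rules out that $g$ covers $C$ through its $P_1$-side, and also rules out $w(g)>w_i$ (a chord of weight exceeding $w_i$ is at least $w(e)$ for every edge $e$ of $C$, hence covers every path of $C$); from what is left one checks that $g$ covers $C_1$. Since $t\in V(C_1)\cup V(C_2)$, if $t\in V(C_1)$ and $|V(C_1)|\ge 4$ then $C_1$ is a violated cycle of $G_i$ shorter than $C$, a contradiction; the case $t\in V(C_2)$ is symmetric, handled either by arranging for $t$ to lie on the side we pass to, or by re-running the well-coveredness argument for $C_2$ using a shortest badly covered chord.

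Two degenerate situations still have to be handled by hand, in the spirit of the proof of Lemma~\ref{lem:holediamond}: (i) the subcycle one would recurse on is a triangle, which is never violated because it has no non-adjacent pair --- here the relation $w(f)\ge w(e)$ for the two other edges $e$ of the triangle lets one choose a two- or three-element set $S$ and verify $f_{uv}(S)\le w_i-w_{i+1}<0$ exactly as in the proof of Lemma~\ref{lem:holediamond}; and (ii) the vertex $t$ not adjacent to $r$ would be lost in passing to a subcycle, which one again resolves by choosing $S$ around the surviving vertices and evaluating $f$ directly. I expect the main obstacle to be exactly this bookkeeping: when moving to a smaller subcycle one must keep, simultaneously, membership in some $G_j$ with $j<k$, well-coveredness, and a vertex not adjacent to $r$. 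The first two are what the ``shortest covered side'' choice of chord is designed to secure --- together with the observation that heavy chords are automatically covering --- while the third, together with the triangle configurations that cannot be avoided, is what forces the small amount of extra direct case analysis modelled on Lemma~\ref{lem:holediamond}.
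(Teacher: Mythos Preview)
Your overall strategy---reduce a minimal violated cycle to a bad hole or bad induced diamond and invoke Lemma~\ref{lem:holediamond}---is the same as the paper's. But the key step, showing that your subcycle $C_1 = P_1 + f$ is well-covered, contains a genuine error.

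Take a chord $g = xy$ of $C_1$ with $w(g) \leq w_i$, and let $Q_1 \subseteq P_1$ be the $x$--$y$ subpath of $P_1$ and $Q_2$ its complement in $C$. Your minimality of $P_1$ rules out that $g$ covers $Q_1$, so (since $C$ is well-covered) $g$ covers $Q_2$. You then assert that ``from what is left one checks that $g$ covers $C_1$''. This does not follow. Since $g$ fails to cover $Q_1$, some edge $e^* \in Q_1 \subseteq P_1$ satisfies $w(e^*) > w(g)$; but $f$ covers $P_1$, so $w(f) \geq w(e^*) > w(g)$. In $C_1$ the side of $g$ other than $Q_1$ is $(P_1 \setminus Q_1) + f$, which contains the edge $f$; hence $g$ fails to cover that side as well. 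So $g$ does not cover $C_1$, and $C_1$ need not be well-covered. Your minimization over covered sides does not preclude such a $g$: it only forces $|Q_2| \geq |P_1|$, which is perfectly compatible with the configuration above.

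The paper resolves this with a different two-level minimization: first pick $j$ minimal such that $G_j$ contains a violated cycle---so that no $G_\ell$ with $\ell < j$ contains one---and only then take $C$ smallest in $G_j$. The absence of violated cycles below level $j$ is exactly what powers the paper's Claim~\ref{clm:subcycles} (that \emph{both} subcycles of \emph{any} chord of $C$ are well-covered): if a subcycle were not well-covered one locates a chord $g$ with $w(g) = w_\ell < w_j$ whose ``light'' subcycle of $C$ lives entirely in $G_\ell$ and, not being violated there, has all its vertices adjacent to $r$ in $G_\ell$; combined with the minimality of $|C|$ inside $G_j$, this manufactures a strictly smaller violated cycle in $G_j$, a contradiction. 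Your single minimization over $|C|$ across all $i$ does not deliver the ``nothing below level $j$'' property, so this mechanism is unavailable to you. The paper also needs a second structural step (Claim~\ref{clm:root}) tying non-adjacency inside $C$ to non-adjacency with $r$, which is precisely what handles the situation you describe as ``the vertex $t$ would be lost''; your deferral of that case, and of the triangle case, to a direct $f_{uv}$ computation ``modelled on Lemma~\ref{lem:holediamond}'' is a hope rather than an argument.
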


\begin{proof}
We will prove the contrapositive. Let $j$ be the smallest integer such that $G_j$ contains a violated cycle. By our choice of $j$, there are no violated cycles in $G_i$ for all $i<j$. Let $C$ be a smallest violated cycle in $G_j$. Then, $\max_{e\in E(C)}w(e)=w_j$. We first prove the following claim:

\begin{claim}\label{clm:subcycles}
For any chord $f$, the subcycles of $C$ formed by $f$ are well-covered.
\end{claim}

\begin{proof}
Let $C_1$ and $C_2$ denote the subcycles of $C$ formed by $f$. For the purpose of contradiction, suppose $C_2$ is not well-covered. Let $g=uv$ be the cheapest chord in $C_2$ such that $w(g)<w(f)$ and $w(g)<w(h)$ for some edge $h\in E(C_2)$, where $f$ and $h$ lie in different subcycles of $C_2$ formed by $g$ (see Figure \ref{fig:subcycles} for an example). This chord exists because $C$ is well-covered but $C_2$ is not. Consider the subcycles $C_3$ and $C_4$ of $C$ formed by $g$, where $f$ is a chord of the former. Observe that $C_3$ is well-covered because $w(g)<w(h)$, while $C_4$ is well-covered due to our choice of $g$. Moreover, we have $w(g)\geq w(e)$ for all $e\in E(C_3)$ as $C$ is well-covered. Let $w(g)=w_\ell$ for some $\ell<j$. Then, $C_3$ is still present in $G_\ell$ but not $f$ because $w(g)<w(f)$. Thus, the vertices of $C_3$ are adjacent to $r$ in $G_\ell$ because there are no violated cycles in $G_\ell$. In particular, we have $ru,rv\in E_\ell$. Next, since $C$ is a violated cycle in $G_j$, there exists a vertex $s\in V(C_4)\setminus V(C_3)$ such that $rs\notin E_j$. This implies that the vertices of $C_4$ are pairwise adjacent in $G_j$, as otherwise it is a smaller violated cycle than $C$. In particular, we have $su,sv\in E_j$. Now, consider the 4-cycle $D$ defined by $E(D):=\set{ru,rv,su,sv}$. It is well-covered because $w(g)=w_\ell$ and $ru,rv\in E_\ell$. As $rs\notin E_j$, it is a violated cycle in $G_j$. However, it is smaller than $C$ because $C_3$ has at least 4 vertices. We have arrived at a contradiction.	
\end{proof}

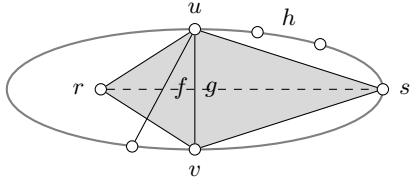
\begin{figure}
\centering
\def \hrad {2.5}
\def \vrad {0.8}
\begin{tikzpicture}
	\path [fill=gray!30] (-\hrad/2,0) -- (0,\vrad) -- (\hrad,0) -- (0,-\vrad) -- (-\hrad/2,0);

	\draw[thick,color=black!50] (0,0) ellipse (\hrad cm and \vrad cm);

	\node[node,label=left:\footnotesize{$r$}] (r) at (-\hrad/2,0) {};
	\node[node,label=right:\footnotesize{$s$}] (s) at (\hrad,0) {};
	\node[node,label=above:\footnotesize{$u$}] (u) at (0,\vrad) {};
	\node[node,label=below:\footnotesize{$v$}] (v) at (0,-\vrad) {};
	\node[node] (v1) at (\hrad/3,0.95*\vrad) {};
	\node[node] (v2) at (\hrad/1.5,0.75*\vrad) {};
	\node[node] (v3) at (-\hrad/3,-0.95*\vrad) {};

	\node [font=\footnotesize] (h) at (0.5*\hrad,1.2*\vrad) {$h$};

	\path[dashed edge] (r) -- (s);
	\path[edge] (r) -- (u);
	\path[edge] (r) -- (v);
	\path[edge] (s) -- (u);
	\path[edge] (s) -- (v);
	\path[edge] (u) -- node [weight,right] {\footnotesize{$f$}} (v3);
	\path[edge] (u) -- node [weight,right] {\footnotesize{$g$}} (v);

\end{tikzpicture}
\caption{The ellipse represents the violated cycle $C$ in Claim \ref{clm:subcycles}. The shaded region highlights the smaller violated cycle $D$. The dashed edge indicates $rs\notin E_j$.}
\label{fig:subcycles}
\end{figure}

Our goal is to show the existence of a bad hole or a bad induced diamond in $G_j$. Then, we can invoke Lemma \ref{lem:holediamond} to conclude that the game is not submodular. We may assume that $C$ has a chord, otherwise it is trivially a bad hole. First, consider the case when $r\in V(C)$. Let $s\in V(C)$ where $rs\notin E_j$. For any chord $f$ in $C$, observe that $r$ and $s$ lie in different subcycles of $C$ formed by $f$. This is because the subcycles are well-covered by the previous claim, so the one which contains both $r$ and $s$ will contradict the minimality of $C$. Now, let $g$ be a chord of $C$. Let $C_r$ and $C_s$ denote the subcycles of $C$ formed by $g$ where $r\in V(C_r)$ and $s\in V(C_s)$. Observe that the vertices of $C_r$ are adjacent to $r$ due to the minimality of $C$. Thus, $C_r$ is a triangle. Otherwise, there is a chord in $C_r$ incident to $r$, and it forms a subcycle of $C$ which contains both $r$ and $s$. On the other hand, the vertices of $C_s$ are pairwise adjacent due to the minimality of $C$. Hence, $C_s$ is also a triangle. Otherwise, there exists a chord in $C_s$ incident to $s$, and it forms a subcycle of $C$ which contains both $r$ and $s$. Therefore, $C$ is a bad induced diamond in $G_j$.
 
Next, consider the case when $r\notin V(C)$. From this point forward, we may assume that every smallest violated cycle in $G_j$ does not contain $r$. Otherwise, we are back in the first case again. With this additional assumption, non-adjacency within $C$ implies non-adjacency with $r$, as shown by the following claim.

\begin{claim}\label{clm:root}
For any pair of vertices $u,v\in V(C)$ such that $uv\notin E_j$, we have $ru\notin E_j$ or $rv\notin E_j$.
\end{claim}

\begin{proof}
For the purpose of contradiction, suppose $ru,rv\in E_j$. Let $s\in V(C)$ such that $rs\notin E_j$. Let $P_{su}$ and $P_{sv}$ denote the edge-disjoint $s$-$u$ and $s$-$v$ paths in $C$ respectively. Let $u'$ and $v'$ be the closest vertex to $s$ on $P_{su}$ and $P_{sv}$ respectively such that $ru',rv'\in E_j$ (see Figure \ref{fig:nonadjacent} for an example). Without loss of generality, let $w(ru')\geq w(rv')$. Denote $P_{su'}$ and $P_{sv'}$ as the $s$-$u'$ and $s$-$v'$ subpaths of $P_{su}$ and $P_{sv}$ respectively. Now, consider the cycle $D:=P_{su'}+P_{sv'}+ru'+rv'$. Observe that it contains $r$ and is no bigger than $C$. Furthermore, it does not contain a chord incident to $r$ by our choice of $u'$ and $v'$. To arrive at a contradiction, it is left to show that $D$ is well-covered, as this would imply $D$ is violated. Suppose for a contradiction, that $D$ is not well-covered. Then, there exists a chord $g$ in $D$ such that $w(g)<w(ru')$ and $w(g)<w(h)$ for some $h\in E(D)$, where $ru'$ and $h$ lie in different subcycles of $D$ formed by $g$. This chord exists because $C$ is well-covered but $D$ is not. Let $C_1$ and $C_2$ denote the subcycles of $C$ formed by $g$, where $h\in E(C_2)$. Note that $C_1$ is well-covered because $w(g)<w(h)$. Moreover, we also have $w(g)\geq w(e)$ for all $e\in E(C_1)$ because $C$ is well-covered. Let $w(g)=w_\ell$ for some $\ell<j$. Then, $C_1$ is still present in $G_\ell$ but not $ru'$. Since $C_1$ also contains $u,v$ and $uv\notin E_\ell$, it is a violated cycle in $G_\ell$. However, this is a contradiction because there are no violated cycles in $G_\ell$.
\end{proof}

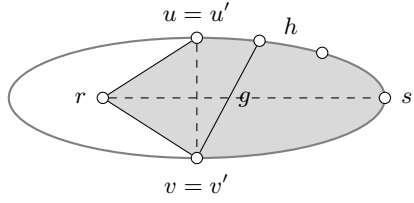
\begin{figure}
\centering
\def \hrad {2.5}
\def \vrad {0.8}
\begin{tikzpicture}
	\begin{scope}
		\clip (0,0) ellipse (\hrad cm and \vrad cm);
		\path [fill=gray!30] (-\hrad/2,0) -- (0,\vrad) -- (\hrad,\vrad) -- (\hrad,-\vrad) -- (0,-\vrad) -- (-\hrad/2,0);
	\end{scope}

	\draw[thick,color=black!50] (0,0) ellipse (\hrad cm and \vrad cm);

	\node[node,label=left:\footnotesize{$r$}] (r) at (-\hrad/2,0) {};
	\node[node,label=right:\footnotesize{$s$}] (s) at (\hrad,0) {};
	\node[node,label=above:\footnotesize{$u=u'$}] (u) at (0,\vrad) {};
	\node[node,label=below:\footnotesize{$v=v'$}] (v) at (0,-\vrad) {};
	\node[node] (v1) at (\hrad/3,0.95*\vrad) {};
	\node[node] (v2) at (\hrad/1.5,0.75*\vrad) {};

	\node [font=\footnotesize] (h) at (0.5*\hrad,1.2*\vrad) {$h$};

	\path[dashed edge] (r) -- (s);
	\path[dashed edge] (u) -- (v);
	\path[edge] (r) -- (u);
	\path[edge] (r) -- (v);
	\path[edge] (v1) -- node [weight,right] {\footnotesize{$g$}} (v);

\end{tikzpicture}
\caption{The ellipse represents the violated cycle $C$ in Claim \ref{clm:root}. The shaded region highlights the violated cycle $D$. The dashed edges indicate non-adjacency in $G_j$. In this example, $u=u'$ and $v=v'$.}
\label{fig:nonadjacent}
\end{figure}

The remaining proof proceeds in a similar fashion to the first case. Let $u,v\in V(C)$ such that $uv\notin E_j$. By the claim above, we know that $ru\notin E_j$ or $rv\notin E_j$. For any chord $f$ in $C$, observe that $u$ and $v$ lie in different subcycles of $C$ formed by $f$. This is because the subcycles are well-covered, so the one which contains both $u$ and $v$ will contradict the minimality of $C$. Now, let $g$ be a chord of $C$. Let $C_u$ and $C_v$ denote the subcycles of $C$ formed by $g$ where $u\in V(C_u)$ and $v\in V(C_v)$. The vertices of $C_u$ are pairwise adjacent due to the minimality of $C$. Thus, $C_u$ is a triangle. Otherwise, there exists a chord in $C_u$ incident to $u$, and it forms a subcycle of $C$ which contains both $u$ and $v$. By an analogous argument, $C_v$ is also a triangle. Therefore, $C$ is a bad induced diamond in $G_j$.
\end{proof}

Notice that we have proven something stronger. Namely, if $G_j$ contains a violated cycle, then there exists an $i\leq j$ such that $G_i$ contains a bad hole or a bad induced diamond. Moreover, as mentioned earlier, bad holes and hamiltonian cycles of bad induced diamonds are violated cycles themselves. Thus, we obtain the following corollary.

\begin{corollary}
\label{cor:violated}
There are no bad holes or bad induced diamonds in $G_i$ for all $i<k$ if and only if there are no violated cycles in $G_i$ for all $i<k$.
\end{corollary}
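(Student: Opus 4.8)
The plan is to get both directions almost for free from material already in hand. For the \emph{if} direction, I would simply observe that bad holes and hamiltonian cycles of bad induced diamonds are themselves violated cycles. Indeed, a bad hole is an induced cycle on at least four vertices, so it has no chords and is therefore vacuously well-covered (using the convention already adopted after Lemma \ref{lem:violated}); it has a non-adjacent pair of vertices because its length is at least four; and it has a vertex not adjacent to $r$ by the definition of \emph{bad}. Likewise, if $D$ is a bad induced diamond in $G_i$, its hamiltonian $4$-cycle $C$ is well-covered by the definition of a bad induced diamond, the two tips of $D$ form a non-adjacent pair inside $C$, and at least one tip is not adjacent to $r$; hence $C$ is a violated cycle in $G_i$. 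So the absence of violated cycles in $G_i$ for all $i<k$ immediately yields the absence of bad holes and bad induced diamonds in $G_i$ for all $i<k$.

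For the \emph{only if} direction I would prove the contrapositive by reusing the argument inside the proof of Lemma \ref{lem:violated}. Suppose $G_i$ contains a violated cycle for some $i<k$; let $j\le i$ be the smallest index such that $G_j$ contains a violated cycle, and let $C$ be a smallest violated cycle in $G_j$. The key point is that the proof of Lemma \ref{lem:violated} does more than derive a contradiction: it actually exhibits a bad hole or bad induced diamond that lives in $G_j$ itself. If $C$ has no chord, then $C$ is already a bad hole in $G_j$. If $C$ has a chord and $r\in V(C)$, the case analysis there shows that $C$ is a bad induced diamond in $G_j$. If $C$ has a chord and $r\notin V(C)$, then — after possibly replacing $C$ by a smallest violated cycle through $r$ and reverting to the previous case — $C$ is again a bad induced diamond in $G_j$. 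Since $j\le i<k$, this produces a bad hole or bad induced diamond in $G_i$ for some $i<k$, which is what we need. Combining the two directions gives the corollary, and in fact records the slightly stronger statement already noted in the text: a violated cycle in $G_j$ forces a bad hole or bad induced diamond in some $G_i$ with $i\le j$.

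The only real obstacle is bookkeeping: one must check that the structures produced in the proof of Lemma \ref{lem:violated} genuinely stay inside $G_j$ rather than migrating to a graph with a larger index. But inspecting that proof shows every structure exhibited — the chordless $C$ in the hole case, and the diamond $C$ in each of the two chorded cases — has all of its edges among the edges of the original violated cycle $C$, whose maximum weight is exactly $w_j$; hence each such structure lies in $G_j$. No estimation or further case analysis is required beyond quoting Lemma \ref{lem:violated} and the definitions.
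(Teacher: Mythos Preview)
Your proposal is correct and matches the paper's own justification exactly: the paper observes, in the paragraph immediately preceding the corollary, that the proof of Lemma~\ref{lem:violated} already produces a bad hole or bad induced diamond in some $G_i$ with $i\le j$ whenever $G_j$ has a violated cycle, and that bad holes and hamiltonian cycles of bad induced diamonds are themselves violated cycles. One small inaccuracy in your final bookkeeping remark: the bad induced diamond consists of the edges of $C$ \emph{together with} its chord $g$, which is not an edge of $C$; but since $g$ is by hypothesis a chord of $C$ in $G_j$, your conclusion that the structure lives in $G_j$ is still correct.
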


\section{Candidate edges and expensive neighborhood}
\label{sec:candidate_edges}
In the previous section, we have shown that violated cycles are an obstruction to submodularity. In light of this fact, we now focus on graphs which do not contain violated cycles. For the sake of brevity, we will use $(\star)$ to denote the following property:

\begin{center}
\textit{There are no violated cycles in $G_i$ for all $i<k$.}
\end{center}

The goal of this section is to study the behaviour of $f_{uv}$ assuming $(\star)$ holds. As a first step, the following lemma sheds light on how a minimum spanning tree changes under vertex removal.

\begin{lemma}\label{lem:reconnect}
Assume $(\star)$ holds. Let $T$ be a minimum spanning tree of $G[S]$ where $r\in S\subseteq V$. For any $s\neq r$, there exists a minimum spanning tree of $G[S\setminus s]$ which contains $E(T\setminus s)$ and additionally, only uses edges from $G[N_T(s)\cup r]$.
\end{lemma}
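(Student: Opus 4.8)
The plan is to start from $T$, delete $s$, and then carefully reconnect the (at most $\deg_T(s)$) components using only edges between $N_T(s)\cup\{r\}$, arguing along the way that these reconnecting edges can be chosen to be cheap enough that the resulting tree is a minimum spanning tree of $G[S\setminus s]$. Removing $s$ from $T$ splits $T\setminus s$ into components $T_1,\dots,T_d$ where $d=\deg_T(s)$ and each $T_p$ contains exactly one neighbor $s_p\in N_T(s)$; say $r\in V(T_1)$. Since $G[S\setminus s]$ is connected (it is a complete graph minus nothing relevant), we can reconnect these components into a spanning tree $T'$ of $G[S\setminus s]$ using $d-1$ edges. First I would observe that by Lemma \ref{lem:reuse} applied with the ground tree $T'$ chosen to agree with $T\setminus s$, it suffices to produce \emph{some} spanning tree of $G[S\setminus s]$ containing $E(T\setminus s)$ and using only edges of $G[N_T(s)\cup r]$ that has the right weight; equivalently, I want to show there is a minimum spanning tree of the auxiliary multigraph obtained from $G[S\setminus s]$ by contracting each $T_p$ to a single node, such that this MST uses only edges with both endpoints in $N_T(s)\cup r$. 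Call this contracted graph $\widehat{G}$ on nodes $\widehat{T_1},\dots,\widehat{T_d}$; the weight of the edge between $\widehat{T_p}$ and $\widehat{T_q}$ is $\min$ over edges of $G$ between $V(T_p)$ and $V(T_q)$.

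The key step is to prove that in $\widehat{G}$ there is an MST all of whose edges are "realized" by an edge of $G$ incident to two vertices of $N_T(s)\cup r$ — i.e., for each chosen contracted edge $\widehat{T_p}\widehat{T_q}$, we may take the witness edge to be $s_p s_q$ (or to involve $r$). The natural approach is a local exchange argument: suppose an MST $\widehat{M}$ of $\widehat{G}$ uses an edge witnessed by $xy$ with $x\in V(T_p)$, $y\in V(T_q)$, $x\neq s_p$ or $y\neq s_q$, and chosen so that $w(xy)$ is minimum among such "bad" witnesses; then I want to exhibit a cycle in $G$ (or $G_i$ for the appropriate threshold $i$ with $w_i = w(xy)$) on which $xy$ is non-adjacent to something, forcing a violated-cycle structure unless $w(s_p s_q)\le w(xy)$ or $w(r s_q)\le w(xy)$ etc. Concretely: the path in $T$ from $x$ to $y$ goes $x \rightsquigarrow s_p \rightarrow s \rightarrow s_q \rightsquigarrow y$, and together with the chord $xy$ this is a cycle $C$ in $G$; all tree edges on it have weight $\le w(xy)$ (else $xy$ would have been swapped into $T$), so $C$ lives in $G_i$ where $w_i=w(xy)$, and if $i<k$ property $(\star)$ says $C$ is not a violated cycle. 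Since $s$ sits on $C$ with two non-adjacent neighbors (or $C$ is a hole), well-coveredness fails somewhere, and I would extract from the failure a chord of $C$ cheaper than $w(xy)$ that lets me reroute the witness to a cheaper edge between vertices closer to $N_T(s)\cup r$, contradicting minimality of $w(xy)$; the boundary cases $w(xy)=w_k$ and edges incident to $r$ need to be handled by hand.

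I expect the main obstacle to be exactly this extraction: turning "$C$ is well-covered" into the concrete statement that the witness can be improved, keeping track of which of $N_T(s)$ and $r$ the endpoints belong to, and making sure the induction/minimality is set up on the right quantity (probably $w(xy)$ first, then total number of bad witnesses, or a potential counting how far each witness endpoint is from $s_p$ along $T_p$). A secondary subtlety is that $\widehat{G}$ is a multigraph and some pairs $T_p,T_q$ may have no cheap edge at all — but then the MST of $\widehat{G}$ simply won't use that pair, and connectivity of $\widehat{G}$ via the $s_p$'s and $r$ is what ultimately must be verified; here again $(\star)$ (no bad holes) is what guarantees that the graph induced on $N_T(s)\cup r$ in the relevant $G_i$ is connected enough to reconnect everything. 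Once the contracted-graph claim is established, lifting it back via Lemma \ref{lem:reuse} to a genuine minimum spanning tree of $G[S\setminus s]$ containing $E(T\setminus s)$ and supported on $G[N_T(s)\cup r]$ is routine.
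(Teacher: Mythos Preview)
Your overall strategy---form the fundamental cycle $C$ of a bad reconnecting edge $xy$ in $T$ and invoke $(\star)$---matches the paper's, but the deduction you draw from $(\star)$ is inverted, and this is a genuine gap. The cycle $C$ (the $T$-path from $x$ to $y$ together with $xy$) is \emph{always} well-covered: every chord $f$ of $C$ is a non-tree edge whose fundamental cycle in $T$ is exactly the subcycle of $C$ not containing $xy$, so the MST cycle property gives $w(f)\ge w(e)$ for every edge $e$ on that subcycle. Hence ``well-coveredness fails somewhere'' never occurs, and your plan to extract a cheap chord from that failure cannot get off the ground. What $(\star)$ actually yields is the \emph{other} escape from the definition of ``violated'': once you know $s_ps_q\notin E_i$ (so $C$ has a non-adjacent pair in $G_i$), the conclusion is that every vertex of $C$ is adjacent to $r$ in $G_i$, in particular $w(rs_p),w(rs_q)\le w(xy)$. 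The exchange is then with one of $rs_p,rs_q$---edges to $r$, not chords of $C$---and such a swap reduces the number of bad reconnecting edges rather than the weight of the minimum bad witness, so your extremal choice (minimise $w(xy)$) does not close the argument as stated.

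The paper sidesteps the contracted-graph machinery and picks the right extremal object directly: among all minimum spanning trees of $G[S\setminus s]$ containing $E(T\setminus s)$ (which exist by Lemma~\ref{lem:reuse}), take one, $T'$, using the \emph{most} edges from $G[N_T(s)\cup r]$. If some $uv\in E(T')\setminus E(T)$ has $u\notin N_T(s)\cup r$, form $C$ as above; maximality of $T'$ forces $w(u'v')>w(uv)$ for the $T$-neighbours $u',v'$ of $s$ on $C$, giving the non-adjacent pair, and then $(\star)$ yields $ru',rv'\in E_i$; swapping $uv$ for whichever of $ru',rv'$ crosses the $T'$-cut of $uv$ contradicts maximality. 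That is the entire proof---your contracted graph, witnesses, and layered induction are unnecessary once the extremal quantity is chosen correctly.
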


\begin{proof}
Pick a vertex $s\in S\setminus r$. By Lemma \ref{lem:reuse}, there exists a minimum spanning tree of $G[S\setminus s]$ which contains $E(T\setminus s)$. Let $T'$ be such a tree which uses the most edges from $G[N_T(s)\cup r]$. We will show that $T'$ is our desired tree. For the purpose of contradiction, suppose $T'$ has an edge $uv$ where $uv\notin E(T)$ and $u\notin N_T(s)\cup r$. Note that $u$ and $v$ lie in different components of $T\setminus s$. Let $P_{su}$ and $P_{sv}$ denote the unique $s$-$u$ and $s$-$v$ paths in $T$ respectively. Then, $C:=P_{su}\cup P_{sv} \cup uv$ is a well-covered cycle in $G_i$ where $w(uv)=w_i$. Let $u'$ and $v'$ be the vertices adjacent to $s$ in $P_{su}$ and $P_{sv}$ respectively. By our choice of $T'$, $w(u'v')>w(uv)$. Since $uv$ is the most expensive edge in $C$, the vertices of $C$ are not pairwise adjacent in $G_i$. So they are adjacent to $r$ in $G_i$. However, adding $ru'$ or $rv'$ to $T'$ creates a fundamental cycle which uses the edge $uv$. Swapping it with $uv$ creates another minimum spanning tree of $G[S\setminus s]$ which contains $E(T\setminus s)$ and uses more edges from $G[N_T(s)\cup r]$. We have arrived at a contradiction.
\end{proof}

Given a pair of vertices $u,v\in N$ where $w(uv)=w_i$, the following definition distinguishes the neighbours of $u,v$ in $G$ from the neighbours of $u,v$ in $G_i$.

\begin{definition}
For an edge $uv\in E$, if $w(uv)=w_i$, the \emph{expensive neighborhood} of $uv$ is defined as
\[\hat{N}(uv) := N_k(uv)\setminus N_i(uv).\]
\end{definition}

In other words, the expensive neighborhood of an edge $uv$ is the set of vertices $s\notin \set{u,v}$ such that $\max\set{w(su),w(sv)}>w(uv)$. It turns out that the function $f_{uv}$ always returns zero when evaluated on a set which does not lie entirely in the expensive neighborhood of $uv$.

\begin{lemma}\label{lem:zero}
Assume $(\star)$ holds. Let $u,v\in N$ and $S\in \mathcal{S}_{uv}$. If $S\not\subseteq \hat{N}(uv)$, then $f_{uv}(S)=0$.
\end{lemma}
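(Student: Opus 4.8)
The plan is to use Lemma~\ref{lem:reconnect} (the only place property $(\star)$ enters) to peel vertices off a minimum spanning tree of $G[S\cup\{u,v\}]$ one at a time, thereby rewriting $f_{uv}(S)$ as a difference of two minimum reconnection costs, and then to show those two costs are equal. To set up: since $G$ is complete, $S\not\subseteq\hat N(uv)$ just means there is a vertex $s\in S$ with $\max\{w(su),w(sv)\}\le w(uv)$. Fix a minimum spanning tree $T$ of $G[S\cup\{u,v\}]$; I would first argue that $T$ can be chosen with $uv\notin E(T)$, because if $uv\in E(T)$ then deleting $uv$ splits $T$ into the component of $u$ and that of $v$, and reconnecting them through whichever of $su,sv$ crosses the split gives a spanning tree of weight at most $w(T)$ (using $w(su),w(sv)\le w(uv)$), hence another minimum spanning tree avoiding $uv$. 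With this choice $\delta_T(u)$ and $\delta_T(v)$ are disjoint and $u\notin N_T(v)$, $v\notin N_T(u)$. Write $d_u=|\delta_T(u)|$, $d_v=|\delta_T(v)|$, let $C_1\ni v,C_2,\dots,C_{d_u}$ be the components of $T\setminus u$ and $D_1\ni u,D_2,\dots,D_{d_v}$ those of $T\setminus v$; then $S$ partitions into a ``core'' piece $P_0$ of $T[S]$ (the one carrying the internal vertices of the $u$--$v$ path of $T$), together with $C_2,\dots,C_{d_u}$ and $D_2,\dots,D_{d_v}$, and for $i,j\ge2$ the piece $C_i$ contains exactly one vertex of $N_T(u)$ and no vertex of $N_T(v)\cup r$ other than possibly $r$, and symmetrically for $D_j$.

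Next I would apply Lemma~\ref{lem:reconnect} three times. Removing $v$ from $T$ gives a minimum spanning tree $T'=(T\setminus v)\cup R_v$ of $G[S\cup u]$, where $R_v$ is a minimum-weight set of edges inside $G[N_T(v)\cup r]$ reconnecting the components of $T\setminus v$; since $R_v$ avoids $u$ we have $N_{T'}(u)=N_T(u)$, so removing $u$ from $T'$ gives a minimum spanning tree $(T'\setminus u)\cup R_u$ of $G[S]$ with $R_u$ a minimum-weight reconnection inside $G[N_T(u)\cup r]$. Separately, removing $u$ from $T$ gives a minimum spanning tree $(T\setminus u)\cup\tilde R_u$ of $G[S\cup v]$ with $\tilde R_u$ a minimum-weight reconnection inside $G[N_T(u)\cup r]$. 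Plugging the identities $\textsf{mst}(S\cup u)=w(T)-w(\delta_T(v))+w(R_v)$, $\textsf{mst}(S\cup v)=w(T)-w(\delta_T(u))+w(\tilde R_u)$, $\textsf{mst}(S)=w(T)-w(\delta_T(u))-w(\delta_T(v))+w(R_v)+w(R_u)$, and $\textsf{mst}(S\cup\{u,v\})=w(T)$ into the definition of $f_{uv}$, everything cancels and leaves $f_{uv}(S)=w(\tilde R_u)-w(R_u)$.

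It remains to show $w(\tilde R_u)=w(R_u)$. Both are minimum reconnection costs over the \emph{same} edge set $G[N_T(u)\cup r]$: $\tilde R_u$ reconnects $\{C_1,C_2,\dots,C_{d_u}\}$, while $R_u$ reconnects the components of $T'\setminus u=T[S]\cup R_v$; since no edge inside $G[N_T(u)\cup r]$ touches $v$, the former cost equals the minimum reconnection cost for the partition $\{C_1\setminus v,C_2,\dots,C_{d_u}\}$ of $S$. I would argue that, \emph{provided $r\notin C_i$ for all $i\ge2$}, the edges of $R_v$ can only attach the pieces $D_j$ (and nothing else) to $P_0$, so the component partition induced by $T'\setminus u$ on $S$ is exactly $\{P_0\cup D_2\cup\dots\cup D_{d_v},\,C_2,\dots,C_{d_u}\}=\{C_1\setminus v,C_2,\dots,C_{d_u}\}$; hence $\tilde R_u$ and $R_u$ solve the identical minimum-reconnection problem and $w(\tilde R_u)=w(R_u)$, giving $f_{uv}(S)=0$. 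The case $r\in C_i$ for some $i\ge2$ (in which $r$ then lies in no $D_j$) is handled by running the entire argument with $u$ and $v$ exchanged, which is legitimate since $f_{uv}=f_{vu}$. I expect this last step — the bookkeeping of which subtrees get merged by the reconnections and the verification that the two induced partitions of $S$ really coincide — to be the main obstacle; note that all the force of $(\star)$ has already been spent inside Lemma~\ref{lem:reconnect}, whose restriction of the reconnecting edges to $G[N_T(\cdot)\cup r]$ is precisely what forbids the ``crossing'' edges that would otherwise make $f_{uv}(S)$ strictly negative.
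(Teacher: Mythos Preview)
Your argument is correct and follows essentially the same route as the paper: pick a minimum spanning tree $T$ of $G[S\cup\{u,v\}]$ with $uv\notin E(T)$, apply Lemma~\ref{lem:reconnect} three times, and then show that the two ``reconnection'' problems (for $T\setminus u$ versus $T'\setminus u$) coincide on $N_T(u)\cup r$ so that $w(\tilde R_u)=w(R_u)$. The only cosmetic difference is that the paper removes $u$ before $v$ and establishes the partition equality more abstractly---by noting that $T'\setminus v$ has exactly $d_v$ components (since $N_{T'}(v)=N_T(v)$) and that the vertex $p\in N_T(v)$ sharing a component with $r$ in $T\setminus v$ still shares one with $r$ in $T'\setminus v$---thereby covering both locations of $r$ at once without your $u\leftrightarrow v$ swap or the explicit $P_0,C_i,D_j$ bookkeeping.
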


\begin{proof}
Let $T$ be a minimum spanning tree of $G[S\cup \set{u,v}]$. First, we show that we can assume $uv\notin E(T)$. Since $S\not\subseteq\hat{N}(uv)$, there exists a vertex $s\in S$ such that $\max\set{w(su),w(sv)}\leq w(uv)$. If $uv\in E(T)$, then by rooting $T$ at $s$, $u$ is either a child or a parent of $v$. Adding $su$ to $T$ in the former and $sv$ in the latter creates a fundamental cycle which contains $uv$. Thus, we can replace $uv$ with this new edge to obtain the desired tree. Now, by Lemma \ref{lem:reconnect}, there exists a minimum spanning tree $T'$ of $G[S\cup v]$ which contains $E(T\setminus u)$ and additionally, only uses edges from $G[N_T(u)\cup r]$. Since $v\notin N_T(u)\cup r$, the neighborhood of $v$ is identical in both trees, i.e.~$N_T(v)=N_{T'}(v)$.

Consider the forest $T\setminus v$. Let $p\in N_T(v)$ such that $p$ and $r$ lie in the same component of $T\setminus v$ (see Figure \ref{fig:trees} for an example). Note that $p=r$ if $r\in N_T(v)$. We claim that $p$ and $r$ also lie in the same component of the forest $T'\setminus v$. We may assume that $p\neq r$, as otherwise the claim is trivially true. Moreover, we may assume that $u$ lies on the unique $p$-$r$ path in $T$. Otherwise, we are done because the same path is present in $T'\setminus v$. Let $C_r$ denote the component of $T\setminus v$ which contains $p,r$ and $u$. By Lemma \ref{lem:reconnect}, the endpoints of every edge in $E(T')\setminus E(T\setminus u)$ lie in $C_r$. This proves the claim. 

Using Lemma \ref{lem:reconnect}, we can construct a minimum spanning tree of $G[S\cup u]$ by deleting $v$ from $T$ and adding a set of edges $F$ from $G[N_T(v)\cup r]$. Note that $pr\notin F$ as $p$ and $r$ lie in the same component of $T\setminus v$. Since $p$ and $r$ also lie in the same component of $T'\setminus v$ and $N_T(v)=N_{T'}(v)$, deleting $v$ from $T'$ and adding $F$ creates a minimum spanning tree of $G[S]$. Thus, we get
\begin{align*}
  f_{uv}(S) &= \textsf{mst}(S\cup u) + \textsf{mst}(S\cup v) - \textsf{mst}(S) - \textsf{mst}(S\cup\set{u,v}) \\
  &= \Big(\textsf{mst}(S\cup u) - w(T)\Big) - \Big(\textsf{mst}(S) - w(T')\Big) \\
  &= \Big(w(F) - w(\delta_T(v))\Big) - \Big(w(F) - w(\delta_{T'}(v))\Big) = 0
\end{align*}
as desired.
\end{proof}

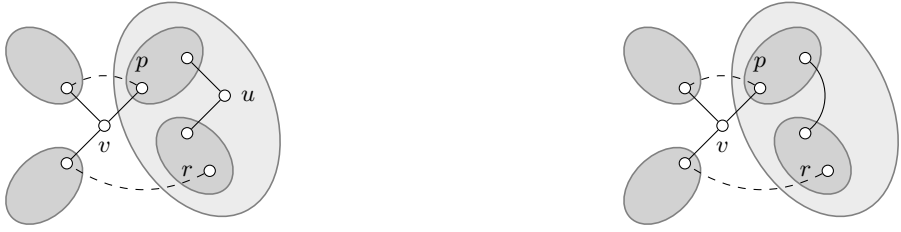
\begin{figure}
\def \shrad {0.7}
\def \svrad {0.4}
\def \bhrad {1.5}
\def \bvrad {1.2}
\centering
\begin{minipage}{0.49\textwidth}
\centering
\begin{tikzpicture}
	\draw[rotate around={25:(1,0)},semithick,color=black!50,fill=gray!15] (1.3,0.1) ellipse (1cm and 1.5cm);

	\draw[rotate around={-45:(-0.8,0.8)},semithick,color=black!50,fill=gray!35] (-0.8,0.8) ellipse (0.6 cm and 0.4 cm);
	\draw[rotate around={45:(-0.8,-0.8)},semithick,color=black!50,fill=gray!35] (-0.8,-0.8) ellipse (0.6 cm and 0.4 cm);
	\draw[rotate around={45:(0.8,0.8)},semithick,color=black!50,fill=gray!35] (0.8,0.8) ellipse (0.6 cm and 0.4 cm);
	\draw[rotate around={45:(1.2,-0.4)},semithick,color=black!50,fill=gray!35] (1.2,-0.4) ellipse (0.4 cm and 0.6 cm);

	\node[node,label=below:\footnotesize{$v$}] (v) at (0,0) {};
	\node[node,label=above:\footnotesize{$p$}] (p) at (0.5,0.5) {};
	\node[node,label=right:\footnotesize{$u$}] (u) at (1.6,0.4) {};
	\node[node,label=left:\footnotesize{$r$}] (r) at (1.4,-0.6) {};
	\node[node] (s1) at (-0.5,0.5) {};
	\node[node] (s2) at (-0.5,-0.5) {};
	\node[node] (s3) at (1.1,0.9) {};
	\node[node] (s4) at (1.1,-0.1) {};

	\path[edge] (v) -- (p);
	\path[edge] (v) -- (s1);
	\path[edge] (v) -- (s2);
	\path[edge] (u) -- (s3);
	\path[edge] (u) -- (s4);
	\path[dashed edge] (p) to [bend right=30] (s1);
	\path[dashed edge] (r) to [bend left=30] (s2);
\end{tikzpicture}
\end{minipage}
\begin{minipage}{0.49\textwidth}
\centering
\begin{tikzpicture}
	\draw[rotate around={25:(1,0)},semithick,color=black!50,fill=gray!15] (1.3,0.1) ellipse (1cm and 1.5cm);

	\draw[rotate around={-45:(-0.8,0.8)},semithick,color=black!50,fill=gray!35] (-0.8,0.8) ellipse (0.6 cm and 0.4 cm);
	\draw[rotate around={45:(-0.8,-0.8)},semithick,color=black!50,fill=gray!35] (-0.8,-0.8) ellipse (0.6 cm and 0.4 cm);
	\draw[rotate around={45:(0.8,0.8)},semithick,color=black!50,fill=gray!35] (0.8,0.8) ellipse (0.6 cm and 0.4 cm);
	\draw[rotate around={45:(1.2,-0.4)},semithick,color=black!50,fill=gray!35] (1.2,-0.4) ellipse (0.4 cm and 0.6 cm);

	\node[node,label=below:\footnotesize{$v$}] (v) at (0,0) {};
	\node[node,label=above:\footnotesize{$p$}] (p) at (0.5,0.5) {};
	\node[node,label=left:\footnotesize{$r$}] (r) at (1.4,-0.6) {};
	\node[node] (s1) at (-0.5,0.5) {};
	\node[node] (s2) at (-0.5,-0.5) {};
	\node[node] (s3) at (1.1,0.9) {};
	\node[node] (s4) at (1.1,-0.1) {};

	\path[edge] (v) -- (p);
	\path[edge] (v) -- (s1);
	\path[edge] (v) -- (s2);
	\path (s3) edge [bend left=50] (s4);
	\path[dashed edge] (p) to [bend right=30] (s1);
	\path[dashed edge] (r) to [bend left=30] (s2);
\end{tikzpicture}
\end{minipage}
\caption{The left image depicts an example of the minimum spanning tree $T$ in $G[S\cup\set{u,v}]$. The right image depicts an example of the minimum spanning tree $T'$ in $G[S\cup v]$. The solid edges belong to the trees while dashed edges belong to the edge set $F$.}
\label{fig:trees}
\end{figure}

We can now focus solely on vertex sets which lie entirely in the expensive neighborhood of $uv$. Observe that if $r\notin \hat{N}(uv)$, then $S\not\subseteq \hat{N}(uv)$ for all $S\in\mathcal{S}_{uv}$. Thus, we do not have to check these edges as $f_{uv}(S)=0$ for all $S\in \mathcal{S}_{uv}$ by the previous lemma. This motivates the following definition:

\begin{definition}
An edge $uv\in E$ is called a \emph{candidate edge} if $r\in \hat{N}(uv)$.
\end{definition}

With a mild assumption, we can show that the function $f_{uv}$ is inclusion-wise nonincreasing in the expensive neighborhood of $uv$.

\begin{lemma}\label{lem:monotone}
Assume $(\star)$ holds and $f_{xy}(\hat{N}(xy))	\geq 0$ for every candidate edge $xy$. Let $uv$ be a candidate edge and $S\in \mathcal{S}_{uv}$ such that $S\subseteq \hat{N}(uv)$. For any $s\neq r$, $f_{uv}(S)\leq f_{uv}(S\setminus s)$.
\end{lemma}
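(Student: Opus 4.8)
The plan is to convert the monotonicity claim into a \emph{nonnegativity} claim about a \emph{heavier} edge, and then to induct on the weight of the edge under consideration. Throughout we may assume $s\in S$ (otherwise $S\setminus s=S$ and the inequality is trivial), and $s\ne r$ as in the hypothesis. Write $w(uv)=w_i$. Since $S\subseteq\hat N(uv)$ and $s\in S$, we have $s\in\hat N(uv)$, i.e.\ $\max\{w(su),w(sv)\}>w_i$; because $f_{uv}$ is symmetric in $u$ and $v$ (and so are $\mathcal S_{uv}$ and $\hat N(uv)$), we may relabel so that $w(sv)\ge w(su)$, and then in particular $w(sv)>w_i$.

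The first step is the purely algebraic identity
\[
f_{uv}(S)-f_{uv}(S\setminus s)=f_{sv}\big((S\setminus s)\cup u\big)-f_{sv}(S\setminus s),
\]
which I would verify by expanding both sides in terms of $\textsf{mst}$: setting $B=S\setminus s$, the four $\textsf{mst}$-terms involving $u$ recombine into $f_{sv}(B\cup u)$ and the remaining four into $-f_{sv}(B)$. The second step is to note that $u\notin\hat N(sv)$: indeed $w(su)\le w(sv)$ and $w(uv)=w_i<w(sv)$, so $u$ is joined to both $s$ and $v$ by edges of weight at most $w(sv)$, hence $u$ lies in the ``cheap'' neighborhood of $sv$ and not in $\hat N(sv)$. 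Therefore $(S\setminus s)\cup u\not\subseteq\hat N(sv)$, and since $(\star)$ holds and $(S\setminus s)\cup u\in\mathcal S_{sv}$ (it contains $r$ but not $s$ or $v$), Lemma~\ref{lem:zero} gives $f_{sv}\big((S\setminus s)\cup u\big)=0$. Plugging this into the identity, the statement of the lemma becomes \emph{equivalent} to $f_{sv}(S\setminus s)\ge 0$.

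To prove $f_{sv}(S\setminus s)\ge 0$ I would induct on $k-i$ where $w(uv)=w_i$, i.e.\ establish Lemma~\ref{lem:monotone} first for candidate edges of weight $w_{k-1}$ and then for lighter ones. Recall $S\setminus s\in\mathcal S_{sv}$ and $w(sv)>w_i$. If $S\setminus s\not\subseteq\hat N(sv)$ — which happens in particular whenever $sv$ is not a candidate edge, because $r\in S\setminus s$ — then Lemma~\ref{lem:zero} gives $f_{sv}(S\setminus s)=0\ge 0$ and we are done. This also settles the base case: if $w(uv)=w_{k-1}$ then $w(sv)=w_k$, so $\hat N(sv)=\emptyset$ and $sv$ is not a candidate edge. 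Otherwise $sv$ is a candidate edge with $S\setminus s\subseteq\hat N(sv)$ and $w(sv)>w_i$, so the inductive hypothesis provides Lemma~\ref{lem:monotone} for the pair $sv$; applying it repeatedly to remove, one vertex at a time, the vertices of $\hat N(sv)\setminus(S\setminus s)$ (each distinct from $r$ since $r\in S\setminus s$, and each intermediate set lying in $\mathcal S_{sv}$ and inside $\hat N(sv)$) yields $f_{sv}(S\setminus s)\ge f_{sv}(\hat N(sv))\ge 0$, the last inequality being the standing hypothesis on candidate edges.

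The main obstacle is spotting the correct induction parameter. Inducting on $|S|$, or on $|\hat N(uv)|$, does not work, because proving $f_{sv}(S\setminus s)\ge 0$ forces one to invoke monotonicity for the pair $sv$ on sets that may be strictly larger than $S$ and than $\hat N(uv)$. What rescues the argument is that the reduction replaces $uv$ by an edge $sv$ of \emph{strictly larger weight}, so an induction on $k-i$ is well founded and terminates, with the top weight class $w_{k-1}$ handled for free by Lemma~\ref{lem:zero}. The remaining pieces — the algebraic identity and the bookkeeping in the peeling step — are routine once this is in place.
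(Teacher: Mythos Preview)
Your proof is correct and follows essentially the same route as the paper's: both reduce the monotonicity statement for $uv$ to the nonnegativity of $f$ for the strictly heavier edge incident to $s$ (your $sv$, the paper's $su$, just a relabeling), use Lemma~\ref{lem:zero} to kill the term involving the third vertex, and then induct on the weight class so that the heavier edge is handled by the inductive hypothesis together with the standing assumption $f_{xy}(\hat N(xy))\ge 0$. The only cosmetic difference is that you package the two applications of Lemma~\ref{lem:zero} and the rearrangement into a single algebraic identity up front, whereas the paper derives them as two separate equations and adds; the content is the same.
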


\begin{proof}
Pick a vertex $s\in S\setminus r$. Without loss of generality, assume $w(su)\geq w(sv)$. Then, $w(su)>w(uv)$ because $s
\in \hat{N}(uv)$. However, these two inequalities also imply that $v\notin \hat{N}(su)$. It follows that the set $(S\setminus s)\cup v$ is not contained in the expensive neighborhood of $su$. By Lemma \ref{lem:zero},
\[0 = f_{su}((S\setminus s)\cup v) = \textsf{mst}(S\cup v) + \textsf{mst}((S\setminus s)\cup \set{u,v}) - \textsf{mst}((S\setminus s)\cup v) - \textsf{mst}(S\cup\set{u,v}).\]
Rearranging yields
\begin{equation}\label{eqn:one}
	\textsf{mst}(S\cup v) - \textsf{mst}(S\cup\set{u,v}) = \textsf{mst}((S\setminus s)\cup v) - \textsf{mst}((S\setminus s)\cup\set{u,v}).
\end{equation}
Since $uv$ is a candidate edge, let $w(uv)=w_i$ for some $i<k$. We will proceed by induction on $i$. For the base case $i=k-1$, we have $w_{k-1} = w(uv)<w(su)=w_k$. Since $\hat{N}(su)=\emptyset$, the set $S\setminus s$ is not contained in the expensive neighborhood of $su$ because $r\in S\setminus s$. By Lemma \ref{lem:zero},
\[0 = f_{su}(S\setminus s) = \textsf{mst}(S) + \textsf{mst}((S\setminus s)\cup u) - \textsf{mst}(S\setminus s) - \textsf{mst}(S\cup u).\]
Rearranging yields
\begin{equation}\label{eqn:two}
	\textsf{mst}(S\cup u) - \textsf{mst}(S) = \textsf{mst}((S\setminus s)\cup u) - \textsf{mst}(S\setminus s).
\end{equation}
Adding (\ref{eqn:one}) and (\ref{eqn:two}) gives $f_{uv}(S)=f_{uv}(S\setminus s)$. Now, suppose the lemma is true for all $i\geq j$ for some $j<k$. For the inductive step, let $w(uv)=w_{j-1}$. We may assume that $S\setminus s \subseteq \hat{N}(su)$, as otherwise we obtain equality again. This implies that $su$ is a candidate edge because $r\in S\setminus s$. Since $w(su)>w(uv)=w_{j-1}$, we get
\[0 \leq f_{su}(\hat{N}(su)) \leq f_{su}(S\setminus s) = \textsf{mst}(S) + \textsf{mst}((S\setminus s)\cup u) - \textsf{mst}(S\setminus s) - \textsf{mst}(S\cup u)\]
where the first inequality is due to our assumption while the second inequality is due to the induction hypothesis. Then, rearranging and adding it to (\ref{eqn:one}) yields $f_{uv}(S)\leq f_{uv}(S\setminus s)$ as desired.
\end{proof}

\section{Characterization of submodularity}
\label{sec:characterization}
We are finally ready to give an efficient characterization of submodular spanning tree games. 

\begin{theorem}
\label{thm:main}
The spanning tree game on $G$ is submodular if and only if:
\begin{enumerate}
  \item[(i)] There are no violated cycles in $G_i$ for all $i<k$.
  \item[(ii)] For every candidate edge $uv$, $f_{uv}(\hat{N}(uv))\geq 0$.
\end{enumerate}
Furthermore, these conditions can be verified in polynomial time.
\end{theorem}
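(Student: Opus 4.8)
The plan is to obtain the equivalence as an essentially immediate consequence of the results already established in Sections~\ref{sec:violated_cycles} and~\ref{sec:candidate_edges}, and then to argue that each of the two conditions is polynomial-time checkable. \emph{Necessity} is the easy half: if the game is submodular then $(\star)$ holds by Lemma~\ref{lem:violated}, which is exactly (i); and for (ii) observe that whenever $uv$ is a candidate edge we have $\hat N(uv)\in\mathcal S_{uv}$, since by definition $r\in\hat N(uv)$ and $u,v\notin N_k(uv)\supseteq\hat N(uv)$. As recalled in the preliminaries, submodularity is equivalent to $f_{uv}(S)\ge 0$ for all $u,v\in N$ and all $S\in\mathcal S_{uv}$, so evaluating at $S=\hat N(uv)$ gives $f_{uv}(\hat N(uv))\ge 0$.

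For \emph{sufficiency}, assume (i) and (ii). Note that (i) is precisely $(\star)$, and that (i) and (ii) together are precisely the hypothesis of Lemma~\ref{lem:monotone}. Fix $u,v\in N$ and $S\in\mathcal S_{uv}$; it suffices to show $f_{uv}(S)\ge 0$. If $S\not\subseteq\hat N(uv)$ then $f_{uv}(S)=0$ by Lemma~\ref{lem:zero}. Otherwise $r\in S\subseteq\hat N(uv)$, so $uv$ is a candidate edge; writing $\hat N(uv)\setminus S=\set{s_1,\dots,s_m}$ (with no $s_j$ equal to $r$) and applying Lemma~\ref{lem:monotone} repeatedly along the chain $\hat N(uv)\supseteq\hat N(uv)\setminus s_1\supseteq\cdots\supseteq S$ yields $f_{uv}(S)\ge f_{uv}(\hat N(uv))\ge 0$, the last step by (ii). In both cases $f_{uv}(S)\ge 0$, so the game is submodular.

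For the \emph{complexity claim}, condition (ii) is verified directly: there are $O(|V|^2)$ edges, and for each one we decide in constant time whether it is a candidate edge, build $\hat N(uv)$, and evaluate $f_{uv}(\hat N(uv))$ with four minimum-spanning-tree computations. For condition (i) we invoke Corollary~\ref{cor:violated}: it is enough to test, for each of the fewer than $|V|^2$ indices $i<k$, whether $G_i$ contains a bad hole or a bad induced diamond. Bad induced diamonds are found by brute force over the $O(|V|^4)$ choices of the four vertices, checking the required adjacencies, the covering condition on the Hamiltonian $4$-cycle, and non-adjacency of a tip to $r$. For bad holes, note that $G_i$ has one precisely when some vertex $t$ with $rt\notin E_i$ lies on a hole of $G_i$, and that a vertex $t$ lies on a hole of a graph $H$ if and only if $t$ has two non-adjacent neighbours $a,b$ lying in the same connected component of $H-t-(N_H(t)\setminus\set{a,b})$; the nontrivial direction follows by taking a shortest $a$–$b$ path in that component, which is chordless and has no internal vertex adjacent to $t$, so adjoining $t$ produces an induced cycle of length at least $4$. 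This is an $O(|V|^2)$-time test per vertex (one connectivity query per pair of neighbours), so (i) is checkable in polynomial time.

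The equivalence itself is a short synthesis of Lemmas~\ref{lem:violated}, \ref{lem:zero}, and~\ref{lem:monotone}, so I do not expect it to be the hard part. The step that requires genuine care is the efficient verification of (i), specifically the subroutine detecting a hole through a prescribed vertex: one must check that the connectivity-based criterion is correct and, in particular, cannot be fooled into certifying a shorter chordless cycle that avoids $t$ --- it cannot, because shortcutting a shortest path keeps it inside the deleted vertex set. Once that subroutine is in hand, the diamond case and the bookkeeping over all $i<k$ are routine.
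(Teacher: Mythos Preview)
Your proof is correct and follows the same route as the paper: necessity via Lemma~\ref{lem:violated} and the definition of submodularity, sufficiency via the dichotomy $S\not\subseteq\hat N(uv)$ (Lemma~\ref{lem:zero}) versus $S\subseteq\hat N(uv)$ (Lemma~\ref{lem:monotone} applied along a chain), and polynomiality via Corollary~\ref{cor:violated} plus brute force on diamonds and a hole-through-a-vertex test. The only substantive addition is that you spell out an explicit criterion for the hole-through-$t$ subroutine (the paper merely asserts one exists); your connectivity test on $H-t-(N_H(t)\setminus\{a,b\})$ over non-adjacent neighbour pairs $a,b$ is correct, and your justification via a shortest $a$--$b$ path is sound.
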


\begin{proof}
For necessity, assume the game is submodular. Then, Condition $(i)$ is satisfied by Lemma \ref{lem:violated} while Condition $(ii)$ is satisfied trivially. For sufficiency, assume Conditions $(i)$ and $(ii)$ hold. Let $u,v\in N$ and $S\in \mathcal{S}_{uv}$. If $S\not\subseteq \hat{N}(uv)$, then $f_{uv}(S)=0$ by Lemma \ref{lem:zero}. On the other hand, if $S\subseteq \hat{N}(uv)$, then $uv$ is a candidate edge. By Lemma \ref{lem:monotone}, 	
\[f_{uv}(S)\geq f_{uv}(\hat{N}(uv))\geq 0.\]
Therefore, the game is submodular. 

Checking Condition $(ii)$ can clearly be done in polynomial time. Using Corollary \ref{cor:violated}, verifying Condition $(i)$ reduces to searching for bad holes and bad induced diamonds in each $G_i$, which can be done efficiently. To look for bad holes, one could check if there exists a hole through a given vertex $v$ for all $v\in N$ where $rv\notin E_i$. To look for bad induced diamonds, a naive implementation would involve examining all vertex subsets of size 4, which still runs in polynomial time.
\end{proof}

\section{$S$-wide spanning trees}
\label{sec:wide}
In this section, we answer another question posed in \cite{journal/networks/KobayashiO14}. There the authors stated a necessary condition for submodularity of the spanning tree game, and left open whether their condition can be verified in polynomial time. We here show that this is indeed the case. While this is not that relevant anymore in order to characterize submodularity (since we have provided Theorem~\ref{thm:main}), it shows a nice connection with matroid intersection, that might still be of interest.

\begin{theorem}[Theorem 1.2 in \cite{journal/networks/KobayashiO14}]
\label{thm:wide}
If the spanning tree game on $G$ is submodular, then every minimum spanning tree $T$ of $G$ possesses the following two properties:
\begin{enumerate}
	\item[(a)] It holds that $w(rv) \geq  w(ru)$ for every vertex $v \in N$ and every vertex $u \in N$ on the (unique) path connecting $r$ and $v$ in $T$. 
	\item[(b)] For any pair of vertices $u,v \in N$ where $w(uv) < w(rv)$, the cycle obtained by adding $uv$ to $T$ does not contain $r$.
\end{enumerate}
\end{theorem}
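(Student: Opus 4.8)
The plan is to prove both properties by contradiction, each time exhibiting a small violated cycle (a bad hole or the hamiltonian cycle of a bad induced diamond) in some $G_i$ with $i<k$, and then invoking Lemma~\ref{lem:violated} (or directly Lemma~\ref{lem:holediamond}) to conclude that the game is not submodular. The key technical tool is Lemma~\ref{lem:reuse}: for the minimum spanning tree $T$ of $G=G_k$ and any vertex subset $S$, the tree $T[S]$ extends to a minimum spanning tree of $G[S]$, so fundamental-cycle exchange arguments on $T$ behave well.

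For part (a), suppose some $v\in N$ has a vertex $u\in N$ on the $r$--$v$ path $P$ in $T$ with $w(rv)<w(ru)$; pick such a pair with $u$ as close to $r$ as possible, so in particular $ru$ is the most expensive edge on the $r$--$u$ subpath of $P$. Let $w(ru)=w_i$; since $rv\in E$ has smaller weight, $i<k$. Adding $rv$ to $T$ closes a cycle $C$ consisting of $P$ together with $rv$, and $ru\in E(C)$. Because $T$ is a minimum spanning tree, $w(ru)$ is the maximum weight on $C$ (the standard cycle property: no tree edge exceeds the weight of the non-tree edge $rv$ closing its fundamental cycle — actually we get $w(e)\le w(rv)\le w(ru)$ for all $e\in E(P)$ on the cycle, wait, we need $w(rv)\ge w(e)$ for $e\in E(C\setminus rv)$, which holds by the cycle property applied to $rv$). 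Hence $C$ is a well-covered cycle (indeed a cycle all of whose edges are $\le w_i$, so it lies in $G_i$, and within $G_i$ it is vacuously well-covered only if it is chordless; in general I would take a shortest sub-cycle of $C$ in $G_i$ through the edge $ru$ and $r$, $v$, or argue more carefully). The vertex $r$ is on $C$; the vertex $v$ is on $C$ but $rv$ has weight $w(rv)<w_i$, so... here I need $v$ not adjacent to $r$ \emph{in $G_i$} — that fails since $w(rv)<w_i$ means $rv\in E_i$. So the right move is: work in $G_{j}$ where $w_j := w(rv)$, not $w_i$. In $G_j$, the edge $ru$ is absent; take the $r$--$u$ subpath $P'$ of $P$ together with... this needs the sub-path to reconnect. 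The cleanest route: consider the set $S = V(P)\cup\{r\}$ minus the two neighbors of $r$ on... — I will instead directly compute $f_{ru}(S)$ for a suitable $S\in\mathcal{S}$, mirroring the proof of Lemma~\ref{lem:holediamond}, showing $f<0$.

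For part (b), suppose $u,v\in N$ with $w(uv)<w(rv)$ and the fundamental cycle $C$ of $uv$ with respect to $T$ contains $r$. Let $w(rv)=w_j$, so $j<k$ (as $w(uv)<w_j$ means $j\ge 2$ and $rv$ is not the cheapest, but more importantly we need $j<k$; if $w(rv)=w_k$ this is still fine since we only need $j-1<k$, i.e. we work in $G_{j-1}$). In $G_{j-1}$ the edge $rv$ is absent. Using Lemma~\ref{lem:reuse} on $S=V(C)\setminus\{v\}$ together with a short direct computation of $f_{xv}(S')$ for an appropriate endpoint $x$ and set $S'$, I expect to again reach $f<0$. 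The main obstacle, and the part that needs care, is exactly the bookkeeping I stumbled on above: translating ``$T$ violates property (a)/(b)'' into the existence of a genuinely \emph{well-covered} cycle with a vertex non-adjacent to $r$ \emph{in the correct graph $G_i$} — one must pick the index $i$ so that the offending expensive edge disappears while all the cheaper structure survives, and then verify the chord/well-covered condition rather than assuming it. Once the right cycle is identified, Lemma~\ref{lem:violated} or Lemma~\ref{lem:holediamond} closes the argument immediately.
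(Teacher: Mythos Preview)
The paper does not prove this theorem. It is quoted verbatim as Theorem~1.2 of Kobayashi and Okamoto~\cite{journal/networks/KobayashiO14}, and the only thing Section~\ref{sec:wide} does with it is show that conditions~(a) and~(b) can be \emph{tested} in polynomial time, by reducing the question to computing a minimum-cost $S$-wide spanning tree via matroid intersection. So there is no proof in the paper to compare your proposal against.

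Regarding the proposal itself: the strategy of producing a violated cycle and invoking Lemma~\ref{lem:violated} is plausible, but what you have written is a plan with self-acknowledged holes rather than a proof, and the holes are real. Two concrete issues:
\begin{itemize}
  \item In part~(a), the edge $ru$ is the edge of the complete graph $G$, not (in general) an edge of the tree path $P$; so statements like ``$ru$ is the most expensive edge on the $r$--$u$ subpath of $P$'' and ``$ru\in E(C)$'' are unjustified. The fundamental cycle of $rv$ with respect to $T$ consists of $rv$ together with the \emph{tree} edges on $P$, and you have no control over $w(ru)$ from the cycle property.
  \item You also apply the cycle property in the wrong direction: for a non-tree edge $rv$, minimality of $T$ gives $w(rv)\ge w(e)$ for every tree edge $e$ on the fundamental cycle, not $w(e)\le w(ru)$. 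This is exactly why your first attempt at choosing the index $i$ collapses (you end up with $rv\in E_i$), and your fallback ``work in $G_j$ with $w_j=w(rv)$'' is left as a loose end.
\end{itemize}
In both parts you ultimately defer to ``a short direct computation of $f_{\cdot\cdot}(S')<0$ for an appropriate $S'$'' without ever specifying $S'$ or carrying out the computation. That is the entire content of the argument, so as it stands the proposal does not constitute a proof.
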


We show that checking Conditions $(a)$ and $(b)$ can be reduced to computing \emph{$S$-wide} spanning trees, which is defined as follows.
Let $H=(W,F)$ be an undirected graph with edge-costs $c\in \R^{\size{F}}$ and a designated root $r_1\in W$. Let $S:=\set{s_1,\dots,s_k}\subseteq W\setminus r_1$ be a given set of \emph{terminals}. We say that a spanning tree $T$ is \emph{$S$-wide} if every component of $T\setminus r_1$ contains at most one terminal. Equivalently, $T$ is \emph{$S$-wide} if for every $i,j\in [k]$ where $i\neq j$, the unique $s_i-s_j$ path on $T$ contains the root. We are interested in the following problem: 

\begin{center}
\textit{Given $(H,c,r_1,S)$, compute an $S$-wide spanning tree of minimum cost.}
\end{center}

Before solving the problem above, let us explain how one can use it to efficiently test Conditions $(a)$ and $(b)$ of Theorem~\ref{thm:wide}.
To check Condition $(a)$, consider all pairs of vertices $u,v \in N$ with $w(rv) <  w(ru)$, and do the following. Set $H:=G$, $c_e = w(e) \; \forall e \in E$, $r_1 := u$ and $S := \{v,r\}$. Compute an $S$-wide spanning tree $T^*$ of minimum cost in $H$. If the cost of $T^*$ is equal to the weight of a minimum spanning tree of $G$, then $T^*$ is a minimum spanning tree of $G$ violating Condition $(a)$, since $u$ is on the unique path
from $r$ to $v$ in $T^*$. It is not difficult to see that this procedure will eventually find a minimum spanning tree violating Condition $(a)$, if one exists.

Condition $(b)$ can be checked in a similar way. Consider all ordered pairs of vertices $u,v \in N$ with $w(uv) <  w(rv)$, and do the following. Set $H:=G$, $c_e = w(e) \; \forall e \in E$, $r_1 := r$ and $S := \{u,v\}$. Compute an $S$-wide spanning tree $T^*$ of minimum cost in $H$. If the cost of $T^*$ is equal to the weight of a minimum spanning tree of $G$, then $T^*$ is a minimum spanning tree of $G$ violating Condition $(b)$, since adding $uv$ to $T^*$ yields a cycle containing $r$. Once again, it is not difficult to see that this procedure will eventually find a minimum spanning tree violating Condition $(b)$, if one exists.

We will now demonstrate how to compute an $S$-wide spanning tree of minimum cost using matroid intersection.
Let $(H,c,r_1,S)$ be a given instance. Without loss of generality, we may assume that there are no edges between any pair of terminals, as every $S$-wide tree does not use them.  We also assume $k\geq 2$, otherwise this reduces trivially to computing an arbitrary minimum spanning tree. 

First, construct an auxiliary graph $H'=(W',F')$ from $H$ as follows. Create $k-1$ copies of the root vertex $r_2,\dots,r_k$, including its incident edges. Now, let $H'_1$ denote the (multi)graph obtained from $H'$ by identifying $r_1,\dots,r_k$ into a single vertex $r$. Similarly, let $H'_2$ denote the (multi)graph obtained from $H'$ by identifying $s_1,\dots,s_k$ into a single vertex $s$. Note that parallel edges are kept. See Figure \ref{fig:auxgraph} for an example. 

\begin{figure}[ht]
\def\dist{1.5}
\begin{minipage}{0.33\textwidth}
\centering
\begin{tikzpicture}[node distance=\dist cm, inner sep=2.5pt, minimum size=2.5pt, auto]
  \node [node,label=above:\footnotesize{$r_1$}] (r1) {};
  \node [node,label=above:\footnotesize{$r_2$}] (r2) [right of=r1] {};
  \node [node] (u) [below of=r1] {};
  \node [node] (v) [below of=r2] {};
  \node [node] (t1) [below of=u,label=below:\footnotesize{$s_1$}] {};
  \node [node] (t2) [below of=v,label=below:\footnotesize{$s_2$}] {};

  \path (r1) edge (u);
  \path (r1) edge (v);
  \path (r1) edge [bend right=20] (t1);
  \path (r1) edge (t2);
  \path (r2) edge (u);
  \path (r2) edge (v);
  \path (r2) edge (t1);
  \path (r2) edge [bend left=20] (t2);

  \path (u) edge (v);
  \path (u) edge (t1);
  \path (u) edge (t2);

  \path (v) edge (t1);
  \path (v) edge (t2);
\end{tikzpicture}
\end{minipage}
\begin{minipage}{0.32\textwidth}
\centering
\begin{tikzpicture}[node distance=\dist cm, inner sep=2.5pt, minimum size=2.5pt, auto]
  \node [node,label=above:\footnotesize{$r$}] (r) {};
  \node [node] (u) [below left of=r] {};
  \node [node] (v) [below right of=r] {};
  \node [node] (t1) [below of=u,label=below:\footnotesize{$s_1$}] {};
  \node [node] (t2) [below of=v,label=below:\footnotesize{$s_2$}] {};

  \path (r) edge (u);
  \path (r) edge (v);
  \path (r) edge (t1);
  \path (r) edge (t2);
  \path (r) edge [bend right=20] (u);
  \path (r) edge [bend left=20] (v);
  \path (r) edge [bend right=15] (t1);
  \path (r) edge [bend left=15] (t2);

  \path (u) edge (v);
  \path (u) edge (t1);
  \path (u) edge (t2);

  \path (v) edge (t1);
  \path (v) edge (t2);


\end{tikzpicture}
\end{minipage}
\begin{minipage}{0.33\textwidth}
\centering
\begin{tikzpicture}[node distance=\dist cm, inner sep=2.5pt, minimum size=2.5pt, auto]
  \node [node,label=below:\footnotesize{$s$}] (t) {};
  \node [node] (u) [above left of=t] {};
  \node [node] (v) [above right of=t] {};
  \node [node] (r1) [above of=u,label=above:\footnotesize{$r_1$}] {};
  \node [node] (r2) [above of=v,label=above:\footnotesize{$r_2$}] {};

  \path (t) edge (u);
  \path (t) edge (v);
  \path (t) edge (r1);
  \path (t) edge (r2);
  \path (t) edge [bend left=20] (u);
  \path (t) edge [bend right=20] (v);
  \path (t) edge [bend left=15] (r1);
  \path (t) edge [bend right=15] (r2);

  \path (u) edge (v);
  \path (u) edge (r1);
  \path (u) edge (r2);

  \path (v) edge (r1);
  \path (v) edge (r2);
\end{tikzpicture}
\end{minipage}
\caption{The graphs $H'$, $H'_1$ and $H'_2$ respectively for the input graph $K_5$ with 2 terminals.}
\label{fig:auxgraph}
\end{figure}
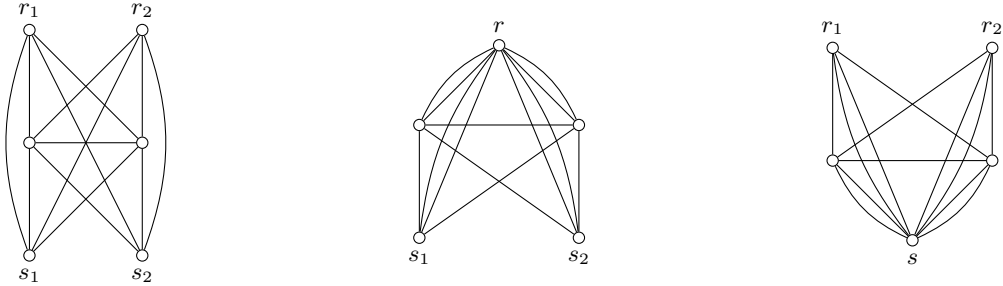

Next, consider the graphic matroids on $H'_1$ and $H'_2$, denoted $M_1 = (F',\mathcal{I}_1)$ and $M_2 = (F',\mathcal{I}_2)$ respectively. We would like to establish a correspondence between $S$-wide spanning trees in $H$ and common bases of $M_1$ and $M_2$. Let $\mathcal{T}$ be the set of $S$-wide spanning trees in $H$. Moreover, let $\mathcal{B}_1$ and $\mathcal{B}_2$ be the set of bases of $M_1$ and $M_2$ respectively. Construct the function $g$ as 
\[g(J) := \set{f(e):e\in J}\]
where $f:F'\rightarrow F$ is defined as
\[f(uv):=
\begin{cases}
	r_1v, &\text{ if }u\in \set{r_1,\dots,r_k}\\
	ur_1, &\text{ if }v\in \set{r_1,\dots,r_k}\\
	uv, &\text{ otherwise.}
\end{cases}\]

\begin{lemma}
For every $J\in \mathcal{B}_1\cap \mathcal{B}_2$, we have $g(J)\in \mathcal{T}$.
\end{lemma}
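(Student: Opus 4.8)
The plan is to show that for a common basis $J$ of $M_1$ and $M_2$, the edge set $g(J)$ is a spanning tree of $H$ whose every component after deleting $r_1$ contains at most one terminal. First I would record the relevant cardinalities: since $H'$ has $|W|+k-1$ vertices and $H'_1$ has $|W|-(k-1)$ vertices (the $k$ roots identified), a basis of $M_1$ has $|W|-k$ edges; similarly $H'_2$ has $|W|-(k-1)$ vertices (the $k$ terminals identified), so a basis of $M_2$ also has $|W|-k$ edges — consistent, so $|J|=|W|-k$. Next I would check that $f$ is injective on any set $J$ that is independent in both matroids, so that $|g(J)|=|J|$; the only way $f$ could collapse two edges is if $J$ contains two edges $r_a v$ and $r_b v$ with $a\neq b$, but these form a cycle in $H'_1$ (both become parallel edges $rv$), contradicting $J\in\mathcal{I}_1$. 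Hence $|g(J)|=|W|-k$.

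The core of the argument is then to understand the structure of $J$ using both independence conditions simultaneously. Independence in $M_1$ means $J$, viewed in $H'_1$ (roots merged), is a forest; independence in $M_2$ means $J$, viewed in $H'_2$ (terminals merged), is a forest. I would argue that in $H'$ itself, $J$ is a forest (a cycle in $H'$ survives in at least one of $H'_1, H'_2$ unless it passes through a root-copy and a terminal alternately — but actually any cycle in $H'$ either avoids all root-copies, in which case it is a cycle in $H'_1$, or it uses some root-copy, and one checks it still projects to a closed walk containing a cycle in $H'_1$ since distinct root-copies are non-adjacent; a cleaner route is to note $J$ being a forest in $H'_1$ already forbids cycles of $H'$ through at most one root-copy, and a cycle through two or more root-copies would, after identification in $H'_1$, still contain a cycle). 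So $J$ is a forest in $H'$ with $|W|-k$ edges on $|W|+k-1$ vertices, hence has exactly $2k-1$ components. Now I use the $M_1$-basis property: contracting the $k$ root-copies to one vertex must connect everything, i.e. every component of $J$ in $H'$ contains at least one root-copy; since there are $k$ root-copies and (I claim) each component contains exactly one — if some component had two root-copies, merging them in $H'_1$ would create a cycle — we would get only $k$ components, contradicting $2k-1$ unless... here I need to be more careful: the components of $J$ in $H'$ that contain a root-copy number exactly $k$ (each root in its own component), and the remaining $k-1$ components contain no root-copy. Symmetrically, by the $M_2$-basis property applied to terminal-copies (there are $k$ terminals $s_1,\dots,s_k$), each $s_i$ lies in its own $J$-component and there are $k$ terminal-containing components. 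Matching these up: the $k-1$ rootless components must each contain at least one terminal (else identifying terminals in $H'_2$ leaves that component disconnected from $s$), and the $k$ terminals are distributed so that... I would count to force that each of the $k$ root-components contains at most one terminal and the rootless components pair up terminals — wait, that over-counts. The precise bookkeeping: $2k-1$ components, $k$ contain a unique root-copy, $k$ contain a unique terminal-copy, and a component cannot contain two root-copies or two terminal-copies; with $2k-1 < 2k$ components, by pigeonhole exactly one component contains both a root-copy and a terminal-copy, $k-1$ contain only a root-copy, and $k-1$ contain only a terminal-copy, and none is empty of both.

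From this structure I recover the $S$-wide tree. Apply $f$: $f$ maps every root-copy edge $r_a v$ to $r_1 v$, so $g(J)$ lives in $H$ and effectively identifies all root-copies into $r_1$; thus $g(J)$ as a subgraph of $H$ is exactly the image of $J$ under merging the $k$ root-components at $r_1$. Merging $k$ components into one turns $2k-1$ components into $2k-1-(k-1)=k$ components, and a standard count ($|W|$ vertices, $|W|-k$ edges, $k$ components, acyclic since $J$ is a forest in $H'_1 \cong$ the merged picture) shows $g(J)$ is a spanning forest of $H$ with $k$ components — but we need a spanning \emph{tree}. Here I realize the resolution: $H$ has $|W|$ vertices, a spanning tree has $|W|-1$ edges, but $|g(J)| = |W|-k$, so $g(J)$ is not spanning unless $k=1$; therefore $g(J)$ must actually be completed, OR the intended claim is that $g(J)$ is a spanning forest inducing an $S$-wide \emph{structure}. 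I would re-examine: actually the natural correspondence is between $S$-wide spanning trees of $H$ and bases — and an $S$-wide spanning tree $T$ of $H$ deleted at $r_1$ has each component with $\le 1$ terminal, and lifting $T$ to $H'$ by assigning each $r_1$-subtree-branch to a distinct root-copy gives exactly $|V(T)|-1 = |W|-1$ edges, not $|W|-k$; the discrepancy of $k-1$ is exactly the $k-1$ edges at $r_1$ that get "used up" connecting the branches through the identified root. So the right statement is that $J$ corresponds to $T$ minus a choice, and $g(J)$ reconstructs $T$ by re-adding the root — meaning $g$ should be interpreted on $H'$'s edge set where root-copy edges all map back to genuine $r_1$-edges and the $k$ root-components, once glued at $r_1$, yield a connected spanning subgraph. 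Re-counting with the gluing: $k$ root-components with a total of (number of edges) and $|W|$ total vertices, glued at one point, gives a connected acyclic graph iff the $k$ components are vertex-disjoint and cover $W$ and together have $|W|-k$ edges — after gluing $k$ points into $1$, edges stay $|W|-k$, vertices become $|W|$, components become $1$: that's $|W|-k$ edges on $|W|$ vertices connected, impossible for a tree. \textbf{The main obstacle}, and where I would focus the real work, is pinning down this cardinality/connectivity bookkeeping correctly — I suspect the intended reading is that $H'$ has the root-copies but the edges among root-copies or the structure is such that $g(J)$ genuinely is a spanning tree; the cleanest path is: prove $g(J)$ is acyclic in $H$ (because a cycle in $g(J)$ pulls back to a cycle in $H'_1$, using injectivity of $f$ on $J$), prove $g(J)$ is connected and spanning in $H$ (because $J\in\mathcal{B}_1$ means $J$ spans $H'_1$, and $f$ is precisely the identification $H'_1 \leftrightarrow$ "$H$ with root-copies glued", which is $H$ itself on vertex set $W$), hence $g(J)$ is a spanning tree of $H$, and finally prove it is $S$-wide: if some component of $g(J)\setminus r_1$ contained two terminals $s_i, s_j$, then the $s_i$-$s_j$ path in $g(J)$ avoids $r_1$, which pulls back to a path in $J \subseteq H'$ connecting the $s_i$-component and $s_j$-component without passing through any root-copy, so identifying terminals in $H'_2$ creates a cycle in $J$, contradicting $J\in\mathcal{I}_2$. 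This last $S$-wideness step is the one genuinely using the second matroid and is the heart of the lemma; the cardinality matching I flagged above should reconcile once one correctly accounts for the $k-1$ extra vertices in $H'$.
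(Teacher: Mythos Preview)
Your ``cleanest path'' at the end is correct and is exactly the paper's argument: $g(J)$ is a spanning tree of $H$ because $J$ is a spanning tree of $H'_1$ and $f$ is precisely the identification map $H'_1 \to H$ (injective on $J$ since two edges $r_av$, $r_bv$ in $J$ would be parallel in $H'_1$); and $g(J)$ is $S$-wide because an $s_i$--$s_j$ path in $g(J)$ avoiding $r_1$ lies entirely in $J$ (no root edges involved) and becomes a cycle in $H'_2$, contradicting $J\in\mathcal{I}_2$. The paper adds the minor refinement of taking the \emph{shortest} such path so that no intermediate vertex is a terminal, guaranteeing the image is a genuine cycle rather than just a closed walk, but your version is fine.

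The long detour before that, however, stems from a concrete arithmetic slip that you never resolve. You write that $H'_1$ has $|W|-(k-1)$ vertices, but $H'_1$ is obtained from $H'$, not from $H$: since $|W'|=|W|+k-1$ and you identify $k$ root-copies into one, $|V(H'_1)| = (|W|+k-1)-(k-1)=|W|$. Hence a basis of $M_1$ has $|W|-1$ edges, not $|W|-k$. The same count gives $|V(H'_2)|=|W|$ and bases of size $|W|-1$. With $|J|=|W|-1$ and $f$ injective on $J$, we get $|g(J)|=|W|-1$, which is exactly the size of a spanning tree of $H$; all of the paradoxes you flag (``$|W|-k$ edges on $|W|$ vertices connected, impossible for a tree'', the $2k-1$ component count, etc.) simply evaporate. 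There is no missing $k-1$ edges to ``re-add'' and no need for the elaborate component bookkeeping: once the cardinality is right, your final paragraph is the whole proof.
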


\begin{proof}
Let $J\in \mathcal{B}_1\cap \mathcal{B}_2$. Then, it is a spanning tree in $H'_1$ and $H'_2$. Note that $g(J)$ is also a spanning tree in $H'_1$. Since $H=H'_1\setminus \set{r_iv\in F':2\leq i \leq k}$ and $g(J)$ is present in $H$, it follows that $g(J)$ is a spanning tree in $H$. It is left to show that $g(J)$ is $S$-wide. For the purpose of contradiction, suppose there exist $i,j\in [k]$ such that $i\neq j$ and the $s_i-s_j$ path in $g(J)$ does not contain the root. Among all such paths, pick the shortest one (in terms of number of edges) and call it $P$. Since $P$ is the shortest, none of its intermediate vertices are terminals. Note that the edges of $P$ belong to $J$, and they form a cycle in $H'_2$, which is a contradiction.
\end{proof}

\begin{lemma}
For every $T\in \mathcal{T}$, there exists a $J\in \mathcal{B}_1\cap \mathcal{B}_2$ such that $g(J)=T$.
\end{lemma}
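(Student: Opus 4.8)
The plan is to construct, from a given $S$-wide spanning tree $T\in\mathcal{T}$, an edge set $J\subseteq F'$ with $g(J)=T$ and then verify that $J$ is a common basis of $M_1$ and $M_2$. First I would describe the construction of $J$. Root $T$ at $r_1$; by the $S$-wide property, deleting $r_1$ breaks $T$ into components, each containing at most one terminal. Each edge $e=r_1 x$ of $T$ incident to the root leads into exactly one such component, and that component contains at most one terminal $s_{i(e)}$ (it may contain none). For every root-edge $e=r_1x$ whose component contains the terminal $s_i$, put the copy $r_i x$ into $J$; for every root-edge whose component is terminal-free, just keep $r_1 x$ in $J$; for every non-root edge of $T$, keep it unchanged in $J$. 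By definition of $f$ we then have $f$ applied to each chosen edge returns the original $T$-edge, so $g(J)=T$, and $|J|=|E(T)|=|W|-1=|W'|-k$ since $W'$ has $k-1$ extra root copies. Crucially no two root-copies $r_i,r_j$ with $i\neq j$ receive an edge that would join their parts, because each root copy $r_i$ is used for exactly the component containing $s_i$.

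Next I would check $J\in\mathcal{B}_1$. Identifying $r_1,\dots,r_k$ into $r$ turns $J$ back into (a graph isomorphic to) $T$ with $r_1$ relabelled $r$: indeed each edge $r_i x\in J$ becomes $rx$, which is exactly the original root-edge $r_1x$ of $T$, and the other edges are untouched. So $J$ spans $H'_1$ and is acyclic there, i.e. it is a basis of $M_1$. Then I would check $J\in\mathcal{B}_2$. In $H'_2$ the terminals $s_1,\dots,s_k$ are merged into a single vertex $s$. The vertex set of $H'_2$ has $|W'|-(k-1)$ vertices, and $|J|=|W'|-k = (|W'|-(k-1))-1$, so it suffices to show $J$ is acyclic in $H'_2$. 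Suppose it contained a cycle $Z$ in $H'_2$. Since $J$ is acyclic in $H'$ itself (being a forest — in fact a spanning tree — there), the cycle $Z$ must use the identification of the $s_i$'s; that is, $Z$ corresponds in $H'$ to a path between two distinct merged terminals, say from the preimage of $s_i$ to the preimage of $s_j$ with $i\neq j$. Translating through $g$ (which only relabels endpoints that are root copies), this yields a path in $T$ between $s_i$ and $s_j$ that avoids every root copy $r_1,\dots,r_k$, hence avoids $r_1$ in $T$ — contradicting the $S$-wide property of $T$. Therefore $J$ is acyclic in $H'_2$ and, by the cardinality count, a basis of $M_2$.

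The main obstacle I expect is bookkeeping the passage between the three graphs $H$, $H'$, $H'_1$, $H'_2$ cleanly: one must be careful that the edge $J$ lives in $F'$, that $g$ is well-defined on $J$ (no two distinct edges of $J$ map to the same edge of $F$ — which holds because at most one root-copy-edge per component is chosen and parallel edges are distinguished in $H'$), and that ``the $s_i$–$s_j$ path in $g(J)$ avoids the root'' correctly lifts to ``a cycle in $H'_2$.'' Handling the possibility of components of $T\setminus r_1$ with no terminal, and the possibility that $r_1$ itself is adjacent in $T$ to a terminal, are the small edge cases to get right; neither causes trouble since such components simply reuse the original copy $r_1$. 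Once these correspondences are pinned down, the two membership checks are immediate from the acyclicity-plus-cardinality argument above.
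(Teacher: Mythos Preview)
Your construction of $J$ is exactly the paper's: for each $i$ replace the first edge $r_1v_i$ on the $r_1$--$s_i$ path in $T$ by its copy $r_iv_i$, so that $J$ becomes a spanning forest of $H'$ with $k$ components, the $i$th containing $r_i$ and $s_i$, and identifying the $r_i$'s (resp.\ the $s_i$'s) glues these $k$ trees into a spanning tree of $H'_1$ (resp.\ $H'_2$). Two small slips to fix: $J$ is a spanning \emph{forest} of $H'$, not a spanning tree (it has $|W|-1=|W'|-k$ edges); and in your $\mathcal{B}_2$ argument the claim that the lifted $s_i$--$s_j$ path ``avoids every root copy'' is not justified as stated --- but you don't need it, since the contradiction is already immediate: $s_i$ and $s_j$ lie in different components of the forest $J$ in $H'$, so no such path exists.
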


\begin{proof}
Let $T$ be an $S$-wide spanning tree in $H$. For every $i\in [k]$, let $r_1v_i$ be the first edge of the $r_1-s_i$ path on $T$. Replace each $r_1v_i$ with $r_iv_i$ and call the resulting edge set $J$. Then, $J$ consists of $k$ subtrees in $H'$, each of which contains $r_i$ and $s_i$. Identifying the $r_i$'s makes $J$ a spanning tree in $H'_1$, while identifying the $s_i$'s makes $J$ a spanning tree in $H'_2$. Thus, $J\in \mathcal{B}_1\cap\mathcal{B}_2$. Moreover, $g(J)=T$.
\end{proof}

The last two lemmas imply that $g$ is a surjective mapping from $\mathcal{B}_1\cap \mathcal{B}_2$ to $\mathcal{T}$. Therefore, we can efficiently compute a minimum cost $S$-wide spanning tree by computing a minimum cost common basis of $M_1$ and $M_2$.

\paragraph{Acknowledgements.}
We would like to thank Kanstantsin Pashkovich for suggesting the connection with matroids in Section \ref{sec:wide}. This work was supported by the NSERC Discovery Grant Program and an Early Researcher Award by the Province of Ontario.

\bibliographystyle{abbrv}
\bibliography{references}

\end{document}